\documentclass[10pt]{article}
\usepackage{amssymb}
\usepackage{mathtools}
\usepackage{amsthm}
\usepackage{microtype}

\newcommand{\Z}{\mathbb{Z}}

\newcommand{\R}{\mathbb{R}}

\newcommand{\SL}{\mathrm{SL}}
\newcommand{\SU}{\mathrm{SU}}
\newcommand{\GL}{\mathrm{GL}}
\DeclareMathOperator{\Aut}{Aut}
\DeclareMathOperator{\Sym}{Sym}

\DeclareMathOperator{\tr}{tr}

\newcommand{\Orth}{\mathrm{O}}

\renewcommand{\setminus}{\smallsetminus}
\newcommand{\norm}[1]{\left\lVert #1 \right\rVert}

\usepackage[colorlinks,citecolor=black,linkcolor=black,
urlcolor=black]{hyperref}

\newtheoremstyle{newdefinition}{}{}{}{}{\bfseries}{.}{ }{\thmname{#1}\thmnumber{ #2}\thmnote{ (#3)}}

\theoremstyle{newdefinition}
\newtheorem{theorem}{Theorem}
\newtheorem{lemma}[theorem]{Lemma}

\newtheorem{corollary}[theorem]{Corollary}

\begin{document}

\title{Maximizing the spectral gap for free bosonic CFTs}
\author{Hyeongjoe Chu, Yuchan Jung\\[4pt] Massachusetts Institute of Technology}
\date{}
\maketitle

\begin{abstract}
We study the spectral gap $\Delta_1$ of Narain CFTs, the smallest scaling dimension of a nonvacuum $U(1)^c_L\times U(1)^c_R$ primary, as a function on the Narain moduli space. Using the exceptional isomorphisms for $SO^+(2,2)$ and $SO^+(3,3)$, we prove the sharp bounds $\Delta_1\le 2/3$ at $c=2$ and $\Delta_1\le 3/4$ at $c=3$, with equality exactly at the $SU(3)_1$ and $SU(4)_1$ WZW models up to $O(c)\times O(c)$, and we describe local maxima of $\Delta_1$. For $4\le c\le 12$, we compare the Narain moduli space with the densest known Euclidean lattices: an arithmetic obstruction excludes five candidates, a computer-assisted classification shows that $E_8$, $K_{12}$, and $BW_{16}$ each admit a unique Narain structure up to lattice automorphisms, and a conceptual argument excludes the Leech lattice.
\end{abstract}

\section{Introduction}

A toroidal CFT of $c$ compact free bosons has left- and right-moving central charges $c_L=c_R=c$ and contains the corresponding abelian current algebras $U(1)^c_L\times U(1)^c_R$. We call such a theory a \emph{Narain CFT}. Such a theory can be completely encoded by a single algebraic object: an even self-dual lattice $\Lambda \subset \R^{c,c}$ of signature $(c,c)$, known as a \emph{Narain lattice}~\cite{MR834338}.

Throughout the paper, $\R^{c,c}=\R^c\oplus\R^c$ carries the split-signature bilinear form
\[
    J_c=\begin{pmatrix}I_c&0\\0&-I_c\end{pmatrix},
    \qquad
    \langle (x,y),(x',y')\rangle_{J_c}
    =x\cdot x'-y\cdot y'.
\]
Evenness and self-duality are understood with respect to this form. We also use the standard Euclidean inner product on the same vector space, with squared norm $\|(x,y)\|^2=|x|^2+|y|^2$. The two forms encode different CFT quantities: the current-algebra primary state associated with a lattice vector $(x,y)$ has scaling dimension and spin
\[
    \Delta=\frac{|x|^2+|y|^2}{2},
    \qquad
    s=\frac{|x|^2-|y|^2}{2}
      =\frac12\langle (x,y),(x,y)\rangle_{J_c}.
\]
In particular, evenness of the Narain lattice implies that $s\in\Z$.

An important invariant of a Narain lattice $\Lambda$ is its \emph{spectral gap}: the smallest scaling dimension among its nonvacuum current-algebra primary states. Thus
\[
    \Delta_1(\Lambda)
    =\frac12\min_{(x,y)\in\Lambda\setminus\{0\}}
      \bigl(|x|^2+|y|^2\bigr).
\]
It is therefore natural to ask which values $\Delta_1$ can attain, and in particular to identify the extremal Narain lattices: those maximizing $\Delta_1$. In Euclidean lattice terms, since self-duality forces every Narain lattice to have covolume one, this amounts to finding the densest lattice packing within the Narain class.

The level-one Wess--Zumino--Witten (WZW) model $\SU(r+1)_1$ corresponds to the Narain lattice
\[
    \Lambda(A_r)=\{(x,y)\in A_r^*\oplus A_r^*:x-y\in A_r\}
\]
\cite[Section~2.3]{MR4916198}, where $A_r\subset\R^r$ is a root lattice of type $A_r$ with roots of squared length $2$ and $A_r^*$ is its Euclidean dual. Replacing $A_r$ by $OA_r$ with $O\in O(r)$ changes $\Lambda(A_r)$ by $\operatorname{diag}(O,O)\in O(r)\times O(r)$, which preserves both $\Delta$ and the spin; since $O(c)\times O(c)$ acts separately on $x$ and $y$, Narain lattices related by it define the same CFT.

In this paper, we give a common algebraic approach to the sharp bounds at $c=2$ and $c=3$:

\medskip
\begin{theorem}\label{thm:c2}
    For every Narain lattice $\Lambda\subset\R^{2,2}$, the spectral gap satisfies $\Delta_1(\Lambda)\le 2/3$, with equality if and only if $\Lambda$ is equivalent under $O(2)\times O(2)$ to $\Lambda(A_2)$, the Narain lattice of the level-one WZW model $\SU(3)_1$.
\end{theorem}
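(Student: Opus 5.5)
We use the exceptional isomorphism $SO^+(2,2)\cong (\SL_2(\R)\times\SL_2(\R))/\{\pm1\}$, realized concretely by identifying $\R^{2,2}$ with the space $M_2(\R)$ of real $2\times 2$ matrices equipped with the quadratic form $\det$ (suitably scaled, say $2\det$), so that the even unimodular lattice $M_2(\Z)\cong U\oplus U$ is a reference Narain lattice. Every Narain lattice in $\R^{2,2}$ is $g\cdot\Lambda_0$ for $g\in O(2,2)$, and modulo $O(2)\times O(2)$ and the component group we may take $g$ in the identity component. There $g$ acts as $X\mapsto AXB^{-1}$ with $A,B\in\SL_2(\R)$. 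Under this dictionary the positive and negative definite $2$-planes correspond to the choice of a maximal compact subgroup. The majorant (Euclidean) norm becomes, after fixing the standard complex structure, $\|X\|^2=\tr(X^TX)$ transported by $(A,B)$. Equivalently, it is the function $\tr(P_1 X P_2 X^T)$ where $P_1=(AA^T)^{-1}$ and $P_2=BB^T$ are points of the upper half-plane. Thus the moduli space becomes (a quotient of) $\mathbb{H}\times\mathbb{H}$, i.e. the familiar $(\tau,\rho)$ parametrization, and $\Delta_1$ is the minimum over nonzero integer matrices $X$ of this quadratic form, suitably normalized.

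Next, we bound the minimum by a clever choice of test vectors. The key step is to relate the form on $M_2(\Z)$ to binary quadratic forms. For a rank-one integer matrix $X=uv^T$ (with $u,v\in\Z^2$) the norm factorizes as $Q_1(u)Q_2(v)$, where $Q_i$ are unimodular positive binary forms of determinant $1$ attached to $P_i$. Hermite's bound $\min Q_i\le 2/\sqrt3$ gives a vector of norm $\le 4/3$, i.e. $\Delta\le 2/3$, at least for spin-zero vectors. Note that $\det X=0$ corresponds to $s=0$, and the normalization must be checked so that $|x|^2+|y|^2=2Q_1(u)Q_2(v)$ when $s=0$. Equality in Hermite's bound forces both $Q_1$ and $Q_2$ to be equivalent to the hexagonal form $\tfrac{2}{\sqrt3}(a^2+ab+b^2)$, i.e. $\tau,\rho$ are both $\SL_2(\Z)$-equivalent to $e^{2\pi i/3}$.

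For the equality case, we then check directly that at $\tau=\rho=e^{2\pi i/3}$ no nonzero integer matrix of nonzero determinant has smaller norm. This follows from the inequality $\|X\|^2\ge 2|s|$ together with a short computation for $|s|=1$. We then identify the resulting lattice with $\Lambda(A_2)$, for instance by exhibiting its minimal vectors, or by comparing the $A_2^*\oplus A_2^*$ gluing description. The remaining discrete ambiguities are the choice of component of $O(2,2)$ and the swap or $T$-dualities acting on $\mathbb{H}\times\mathbb{H}$. These must be shown to map the $SU(3)_1$ point to itself up to $O(2)\times O(2)$, which is where "if and only if" requires care.

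The main obstacle is bookkeeping in the isomorphism: getting the normalization of $\|X\|^2$ exactly right, so that rank-one vectors give precisely $2Q_1Q_2$, and handling the non-identity components of $O(2,2)$ in the uniqueness claim. I would first verify the factorization formula $\|uv^T\|^2\propto Q_1(u)Q_2(v)$ explicitly from $\tr(P_1uv^TP_2vu^T)=(u^TP_1u)(v^TP_2v)$, which is immediate. Everything else then reduces to the two-dimensional Hermite problem.
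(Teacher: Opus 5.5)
Your bound is proved as in the paper. You use the model $(M_2(\R),2\det)$ with $\Lambda_0\mapsto M_2(\Z)$, the factorization $\|A\,uv^TB^{-1}\|_F^2=\|Au\|^2\,\|B^{-T}v\|^2$, Hermite's constant $\gamma_2=2/\sqrt3$, and the spin argument, which is Lemma~\ref{narain2-rank1}. The implication ``$\Delta_1=2/3\Rightarrow$ both planar lattices hexagonal'' is also fine, since $2\Delta_1\le m_1m_2\le\gamma_2^2$ with $m_i=\min Q_i\le\gamma_2$.

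The genuine gap is uniqueness up to $O(2)\times O(2)$, which you leave open and also misdiagnose. The $T$-dualities and the swap $\tau\leftrightarrow\rho$ are automorphisms of $\Lambda_0$ acting on the right, $M\mapsto M\gamma$. They change the parameters but not the lattice $M\Lambda_0$, so there is nothing to check about them. The component issue was already settled by your reduction to $SO^+(2,2;\R)$, since $O(2)\times O(2)$ meets all four components of $O(2,2;\R)$.

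What must actually be shown is this: if $A\Z^2$, $B^{-T}\Z^2$, $A'\Z^2$, $B'^{-T}\Z^2$ are all hexagonal, then $(A,B)$ and $(A',B')$ give $O(2)\times O(2)$-equivalent lattices. The argument is short.
\begin{itemize}
\item Two covolume-one hexagonal lattices differ by an element of $O(2)$, which can be taken in $SO(2)$ because the hexagonal lattice has a reflection symmetry.
\item Hence $A'=RA\gamma$ and $B'^{-T}=R'B^{-T}\gamma'$ with $R,R'\in SO(2)$ and $\gamma,\gamma'\in\SL_2(\Z)$.
\item Therefore $A'M_2(\Z)B'^{-1}=R\bigl(AM_2(\Z)B^{-1}\bigr)R'^{-1}$.
\item The map $X\mapsto RXR'^{-1}$ preserves both the Frobenius norm and $\det$. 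Under $\Phi_2$ it therefore commutes with $J_2$, so it lies in $O(2)\times O(2)$.
\end{itemize}

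You also still owe the identification with $\Lambda(A_2)$. The gluing comparison you mention works and is what the paper does. Take $A=B=3^{-1/4}C$, with $C$ a basis matrix of $A_2$ (columns $c_i$) and $c_j^*$ the columns of $C^{-T}$. Then $AM_2(\Z)B^{-1}=CM_2(\Z)C^{-1}$ has basis $c_i(c_j^*)^T$. Check that each of these four vectors pulls back under $\Phi_2$ to some $(x,y)$ with $x,y\in A_2^*$ and $x-y\in A_2$. Equality of the two lattices then follows because both have covolume one.

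There are also some bookkeeping errors.
\begin{itemize}
\item Take $\Phi_2$ to be both a Euclidean isometry and to carry $J_2$ to $2\det$, as it must be. Then $|x|^2+|y|^2=Q_1(u)Q_2(v)$, not $2Q_1(u)Q_2(v)$; the latter would only give $\Delta_1\le 4/3$.
\item With the action $X\mapsto AXB^{-1}$, the relevant forms are $A^TA$ and $(B^TB)^{-1}$, not $(AA^T)^{-1}$ and $BB^T$.
\item No computation at $|s|=1$ is needed: $\|X\|^2\ge 2|\det X|\ge 2>4/3$ whenever $\det X\neq0$. Applied at every point of moduli space rather than only at the hexagonal one, this gives the paper's exact formula, Corollary~\ref{c2-formula}.
\item Your lower bound at the hexagonal point uses that every rank-one integral matrix is $uv^T$ with $u,v\in\Z^2$. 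This needs the short primitive-vector argument.
\end{itemize}
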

\medskip
\begin{theorem}\label{thm:c3}
    For every Narain lattice $\Lambda\subset\R^{3,3}$, the spectral gap satisfies $\Delta_1(\Lambda)\le 3/4$, with equality if and only if $\Lambda$ is equivalent under $O(3)\times O(3)$ to $\Lambda(A_3)$, the Narain lattice of the level-one WZW model $\SU(4)_1$.
\end{theorem}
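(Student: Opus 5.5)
We prove Theorem~\ref{thm:c3} along the lines suggested by the abstract: we use the exceptional isomorphism $\SO^+(3,3)\cong \SL_4(\R)/\{\pm1\}$, realized through $\R^{3,3}\cong \Lambda^2\R^4$ with the wedge pairing $\omega\wedge\omega'=\langle\omega,\omega'\rangle\,e_1\wedge e_2\wedge e_3\wedge e_4$. (The macro \verb|\SO| is not defined in the paper; in the final text it should be added or replaced by $\mathrm{SO}$.) Every even self-dual lattice of signature $(3,3)$ is isometric to $U^3$, and $U^3\cong\Lambda^2\Z^4$ with this pairing, which is even unimodular. Hence every Narain lattice can be written as $\Lambda=g\cdot\Lambda^2\Z^4$ with $g\in \GL_4(\R)$ acting by $\Lambda^2 g$; if needed, an element of $\Orth(3,3)\setminus \SO^+(3,3)$ is absorbed by a lattice automorphism or by $\Orth(3)\times\Orth(3)$. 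The Euclidean majorant $\|\cdot\|^2$ is determined by the choice of a maximal positive-definite $3$-plane, that is, a point of $\Orth(3,3)/(\Orth(3)\times\Orth(3))$, which corresponds to $\SL_4(\R)/\SO(4)$, the space of unimodular positive-definite quadratic forms $Q=g^Tg$ on $\R^4$. The key identity is that $\|\omega\|^2$ equals the induced Gram norm $\Lambda^2Q$ on bivectors. Therefore
\[
2\Delta_1(\Lambda)=\min_{0\neq \omega\in\Lambda^2\Z^4}(\Lambda^2Q)(\omega,\omega),
\]
where $\det Q=1$. The first step is to verify this identity carefully, including normalizations, by checking it on the standard point $Q=I$, where $\Lambda^2 I$ is the standard Euclidean structure on $\R^6$.

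Next we reduce to a classical minimum. Decomposable integral bivectors $\omega=u\wedge v$ with $u,v\in\Z^4$ give $(\Lambda^2Q)(\omega,\omega)=\det\begin{pmatrix}Q(u,u)&Q(u,v)\\Q(u,v)&Q(v,v)\end{pmatrix}$, which is the squared covolume of the rank-$2$ sublattice spanned by $u$ and $v$ in the $4$-dimensional lattice $L=g\Z^4$ of covolume $1$. Indecomposable bivectors, those with $\omega\wedge\omega\neq0$, have nonzero spin. Thus $\Delta_1$ is at most half the minimal squared $2$-covolume $\gamma(L)$ of $L$, and conversely we must show that the minimum over all $\omega$ is attained, or can be bounded, via decomposable ones or via spin considerations. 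I would try the following. Since $\|\omega\|^2\ge|\langle\omega,\omega\rangle|=|2s|$, any $\omega$ with $s\neq0$ satisfies $\|\omega\|^2\ge2$. Hence for the bound $\Delta_1\le3/4$ we only need the decomposable vectors: it suffices to show that every unimodular $4$-dimensional lattice has a primitive $2$-sublattice of squared covolume at most $3/2$. This is a Rankin-constant statement, $\gamma_{4,2}=3/2$, which is known (Rankin; Conway--Sloane), with equality exactly for $D_4$ up to similarity. I would either cite it or prove it by a reduction argument: take a shortest vector $u$, project $L$ orthogonally to $u$, apply Hermite's bound $\gamma_3=2^{1/3}$ to the $3$-dimensional projection, and then optimize. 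The constants have to be handled carefully, since the naive combination of Hermite bounds is not sharp. This step is the main obstacle. A cleaner route might be to pass to $L^*$ or to $\Lambda^2L$ and use known results on $D_4$ directly.

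Finally, for the equality case, $\Delta_1=3/4$ forces $L$ to be similar to $D_4$. We then check that $\Lambda^2D_4$, rescaled to covolume $1$, transported to $\R^{3,3}$, gives a lattice isometric by $\Orth(3)\times\Orth(3)$ to $\Lambda(A_3)$. This can be done by computing that both have $12$ minimal vectors of norm $3/2$ and spin $0$, corresponding to the $6$ planes of a $D_4$ triality class, or by explicitly matching $A_3\cong D_3$. We also verify that no indecomposable $\omega$ undercuts the minimum, which is immediate from $\|\omega\|^2\ge2>3/2$. Uniqueness up to $\Orth(3)\times\Orth(3)$ follows because $Q$ is determined up to $\GL_4(\Z)$, and $\GL_4(\Z)$ acts on $\Lambda^2\Z^4$ by lattice automorphisms.
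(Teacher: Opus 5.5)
Your route is the paper's. You identify $(\R^{3,3},J_3)$ with $\bigwedge^2\R^4$ under the wedge pairing and bound $2\Delta_1$ by the minimal squared covolume of a rank-two sublattice of a unimodular $L\subset\R^4$. You then cite Rankin's $\gamma_{4,2}=3/2$, with $D_4$ as the unique extremal lattice, and use evenness ($\|\omega\|^2\ge|\langle\omega,\omega\rangle|\ge2$ whenever $\omega\wedge\omega\neq0$) to see that minimal vectors are decomposable. That part is sound. For the exact formula you also tacitly use that a nonzero $\omega\in\bigwedge^2\Z^4$ with $\omega\wedge\omega=0$ equals $u\wedge v$ with $u,v\in\Z^4$; this follows from the rationality of the plane $\omega$ determines, which the paper spells out.

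The uniqueness step, however, fails as written. $\GL_4(\Z)$ does not act on $\bigwedge^2\Z^4$ by automorphisms of the Narain lattice, because $(\gamma u\wedge\gamma v)\wedge(\gamma u'\wedge\gamma v')=(\det\gamma)\,u\wedge v\wedge u'\wedge v'$. So $\bigwedge^2\gamma$ negates the pairing when $\det\gamma=-1$; this is also why the parametrization must use $\SL_4(\R)$ rather than $\GL_4(\R)$. Knowing only $Q'=\gamma^TQ\gamma$ with $\gamma\in\GL_4(\Z)$ therefore relates the two Narain lattices only up to a Euclidean isometry that reverses the sign of $J_3$. Such a map exchanges left and right movers and is not in $O(3)\times O(3)$. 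Equivalently, writing $g'=Rg\gamma$, you need $R\in\mathrm{SO}(4)$ and not merely $R\in O(4)$, since $\det R=\det\gamma$. The missing ingredient is that $D_4$ has an improper automorphism (a reflection in a root), which lets you replace $\gamma$ by an element of $\SL_4(\Z)$. This is exactly how the paper obtains $g'=Rg\gamma$ with $R\in\mathrm{SO}(4)$.

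The first route you offer for identifying the extremal lattice with $\Lambda(A_3)$ also does not work. The count is wrong: $D_4$ contains $16$ sublattices isometric to $A_2$, so $\bigwedge^2(2^{-1/4}D_4)$ has $32$ minimal vectors. This matches the $4\cdot4+4\cdot4=32$ pairs $(x,y)$ of norm-$3/4$ vectors lying in a common spinor coset of $D_3^*/D_3$. In any case, agreement of minimal-vector counts does not produce an element of $O(3)\times O(3)$. What is needed is your second option, actually carried out:
\begin{itemize}
\item transport the basis $2^{-1/2}d_i\wedge d_j$ of $\bigwedge^2(2^{-1/4}D_4)$ to $\R^{3,3}$;
\item check that each image $(x,y)$ satisfies $x,y\in A_3^*$ and $x-y\in A_3$, with $A_3=D_3$;
\item conclude that the lattices coincide because they have equal covolume.
\end{itemize}
This is what the paper's table does.
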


This extremal problem has been studied most directly through the modular bootstrap, in both its spinless and spinning forms. Hartman, Maz\'a\v{c}, and Rastelli~\cite{MR4075697} showed that, for free CFTs, the spinless modular bootstrap coincides with the Cohn--Elkies linear programming bound for sphere packing~\cite{MR1973059}. Afkhami-Jeddi, Cohn, Hartman, and Tajdini~\cite{MR4257711} subsequently applied the full spinning modular bootstrap, constructing linear functionals from derivatives at the self-dual point and optimizing them numerically via semidefinite programming. They obtained upper bounds on $\Delta_1$ for all $c\le 15$. The sharp bound at $c=1$ was proved analytically by constructing an explicit bootstrap functional~\cite[Section~3.2]{MR4257711}, while the $c=4$ bound follows from Viazovska's proof of the optimality of the $E_8$ lattice packing~\cite{MR3664816}. Their bounds at the other values of $c\le 15$ are numerical, and for $c\le 8$ they also constructed candidate Narain lattices by heuristic search~\cite[Section~3.3]{MR4257711}.

One important point is that the bootstrap method replaces the Narain CFT by a broader problem about positive modular-invariant spectra with $U(1)^c\times U(1)^c$ symmetry. This relaxation is well suited to producing general upper bounds, but it does not fully impose the arithmetic condition that the spectrum arises from an even self-dual lattice in $\R^{c,c}$. Indeed, apart from $c=1,2,4$, the numerical bounds of~\cite[Table~1]{MR4257711} differ significantly from the putative optimal values.

The cases $c=2$ and $c=3$ have attracted particular attention, with explicit constructions of candidate optimal lattices~\cite{MR4916198,MR4514857} and both numerical and analytic approaches to the sharp bounds. For $c=2$, the numerical spinning-bootstrap computation of~\cite[Eq.~(3.1)]{MR4257711}, truncated at derivative order $K=19$, gives the upper bound $\Delta_1\le 2/3+2\times10^{-11}$. Benjamin, Collier, Fitzpatrick, Maloney, and Perlmutter~\cite{MR4327142} subsequently obtained an exact formula for the scalar gap using harmonic analysis of the partition function~\cite[Eq.~(3.57)]{MR4327142} and showed that its maximum is $2/3$. Since every state of nonzero spin $s$ has $\Delta\ge|s|\ge1$, this also determines the full spectral gap $\Delta_1$. Our proof of Theorem~\ref{thm:c2} recovers this known result through an algebraic argument that also identifies all local maxima and parallels our proof at $c=3$. For $c=3$, the numerical spinning-bootstrap computation of~\cite[Table~1]{MR4257711} gives only the approximate bound $\Delta_1\lesssim0.8227$ without additional assumptions. Imposing integrality together with parity symmetry and a gap assumption on spin-one Virasoro primaries, Fitzpatrick and Li~\cite{MR4771416} improved the numerical upper bound to $\Delta_1\le0.750746$, close to the expected value $3/4$. The parity and gap assumptions are not consequences of the Narain lattice structure, so this bound does not cover the full Narain moduli space. Theorem~\ref{thm:c3} establishes the sharp bound for all Narain lattices at $c=3$, without these additional assumptions or a numerical approximation.

Our approach exploits the algebraic and arithmetic structure of Narain lattices directly. The common starting point for the two proofs is the pair of exceptional Lie group isomorphisms
\begin{gather*}
    \mathrm{SO}^+(2,2;\mathbb{R})\;\cong\;
\bigl(\mathrm{SL}_2(\mathbb{R})\times \mathrm{SL}_2(\mathbb{R})\bigr)/\{\pm(I,I)\},\\
\mathrm{SO}^+(3,3;\mathbb{R})\;\cong\;\mathrm{SL}_4(\mathbb{R})/\{\pm I\},
\end{gather*}
which we realize concretely through explicit linear isometries of the ambient quadratic spaces. For $c=2$, an explicit linear isometry $\Phi_2$ identifies $(\mathbb{R}^{2,2},J_2)$ with $M_2(\mathbb{R})$ equipped with the quadratic form $2\det$, carrying the reference Narain lattice to $M_2(\mathbb{Z})$. Every Narain lattice of central charge $2$ then becomes a lattice $g\,M_2(\mathbb{Z})\,h^{-1}$ with $g,h\in \mathrm{SL}_2(\mathbb{R})$. Its rank-one elements $(gu)(h^{-T}v)^T$ are built from the two unimodular planar lattices $g\mathbb{Z}^2$ and $h^{-T}\mathbb{Z}^2$, so the classical bound $\gamma_2 = 2/\sqrt{3}$ yields $\Delta_1\le 2/3$. The evenness of the Narain lattice then forces every minimal vector to be of rank one, so $\Delta_1$ is determined by these two planar lattices, with equality precisely when both are hexagonal; up to $O(2)\times O(2)$, the extremal Narain lattice is $\Lambda(A_2)$, which is the $\SU(3)_1$ WZW model. This splitting into two planar lattices reflects the well-known description of the $c=2$ Narain moduli space as a product of two modular curves, up to finite identifications, parametrized by the complex structure and K\"ahler moduli $(\tau,\rho)$~\cite[Section~2.4.2]{MR1288936}, and agrees with the exact scalar-gap formula in~\cite[Eq.~(3.57)]{MR4327142}.

For $c=3$, we proceed analogously through an explicit isometry $\Phi_3$ identifying $(\mathbb{R}^{3,3},J_3)$ with $\bigwedge^2\mathbb{R}^4$ equipped with the wedge pairing, under which the reference lattice becomes $\bigwedge^2\mathbb{Z}^4$ and $\mathrm{SO}^+(3,3;\mathbb{R})$ acts through $\mathrm{SL}_4(\mathbb{R})$. Thus, for each Narain lattice $\Lambda$, there is a matrix $g\in\mathrm{SL}_4(\mathbb{R})$ such that $\Phi_3(\Lambda)=\bigwedge^2L$, where $L=g\mathbb{Z}^4$ is a unimodular Euclidean lattice in $\mathbb{R}^4$. Evenness combines with the Pl\"ucker relation and the bound on the minimum squared length to force every minimal vector to be decomposable. The extremal problem therefore reduces to bounding the minimal squared covolume of a rank-two sublattice of $L$, i.e., to the Hermite--Rankin constant $\gamma_{4,2}=3/2$ proved by Rankin~\cite[Theorem~5]{MR55380}. This gives $\Delta_1\le 3/4$, with equality precisely when $L$ is similar to $D_4$, by Rankin's characterization of the equality case. Transporting $\bigwedge^2L$ back to $\R^{3,3}$ via $\Phi_3^{-1}$ then gives, up to $O(3)\times O(3)$, exactly the lattice $\Lambda(A_3)$, the Narain lattice of the $\SU(4)_1$ WZW model. In contrast with $c=2$, $\Delta_1$ also has local maxima that are not global, for instance the one coming from $A_4$. For related results on minimal vectors in the second exterior power of a lattice, see~\cite{MR1467163}.

In the second half of the paper we turn to higher central charge and ask a complementary question: for $4\le c\le 12$, can the putatively densest Euclidean lattices in dimension $2c$ arise as the Euclidean lattices underlying Narain lattices? We first record an integral form of the criterion of~\cite[Lemma~B.1]{MR4257711} for simultaneous Euclidean and Lorentzian Gram matrices: a determinant-one Gram matrix $E$ underlies a Narain lattice if and only if there exists an even symmetric integral matrix $K$ with $KE^{-1}K=E$ and $\operatorname{tr}(E^{-1}K)=0$. This yields an arithmetic obstruction that rules out the best known lattices in dimensions $10$, $14$, $18$, $20$, and $22$. For $E_8$, $K_{12}$, and $BW_{16}$ we show, by an exhaustive computer enumeration of the solutions $K$ modulo the automorphism group, that each admits a Narain structure that is unique up to automorphisms of the underlying Euclidean lattice. Afkhami-Jeddi et al.~\cite[Appendix~B]{MR4257711} already excluded the Leech lattice using the numerical spinning modular bootstrap. We give a conceptual, lattice-theoretic proof of this known exclusion: the classification of trace-zero involutions in $\mathrm{Co}_0$ shows that no such involution can arise from an even $K$. This identifies the obstruction directly in the symmetries and parity of the lattice, without a numerical bootstrap computation. Thus the Narain moduli space attains the optimal lattice packing density in dimension $8$~\cite{MR1957723} and the conjecturally optimal lattice packing densities in dimensions $12$ and $16$, but provably cannot attain the optimal lattice packing density in dimension $24$, since the Leech lattice is the unique densest lattice in dimension $24$~\cite{MR2600869}.

The paper is organized as follows. Section~\ref{sec:narain} develops the lattice setup and recalls the parametrization of Narain lattices by $\mathrm{SO}^+(c,c;\mathbb{R})$. Sections~\ref{sec:c2} and~\ref{sec:c3} prove Theorems~\ref{thm:c2} and~\ref{thm:c3} via the exceptional isomorphisms at $c=2$ and $c=3$, respectively. Section~\ref{sec:dense} develops the characterization of Narain structures on Euclidean lattices and carries out the comparison with the dense lattice candidates in dimensions $8$ through $24$.

\section{Narain lattice}\label{sec:narain}
We use the split-signature form $J_c$ and the Euclidean norm defined in the introduction. Put
\[
    S_c=\frac{1}{\sqrt 2}
    \begin{pmatrix}
        I_c&I_c\\
        I_c&-I_c
    \end{pmatrix},
\]
so that $S_c$ is symmetric and orthogonal; in particular $S_c^{-1}=S_c$. We write $O(n)$ for the Euclidean orthogonal group, and
\[
O(c,c;\mathbb R)
=
\{M\in \GL_{2c}(\mathbb R):M^T J_cM=J_c\}
\]
for the orthogonal group of $J_c$, and define $SO^+(c,c;\mathbb R)$ to be its identity component.

Define the \emph{reference lattice}
\[
\Lambda_0:=S_c\mathbb Z^{2c}\subset\mathbb R^{c,c}.
\]
Writing $n=(a,b)\in\mathbb Z^c\oplus\mathbb Z^c$, we have $S_cn=\tfrac1{\sqrt2}(a+b,\,a-b)$, and for $n'=(a',b')$ a direct calculation gives
\[
(S_cn)^TJ_c(S_cn')
=\frac12\bigl((a+b)\cdot(a'+b')-(a-b)\cdot(a'-b')\bigr)
=a\cdot b'+a'\cdot b.
\]
In particular $(S_cn)^TJ_c(S_cn)=2\,a\cdot b\in2\mathbb Z$, so $\Lambda_0$ is an even integral lattice with respect to $J_c$. Since $|\det S_c|=1$, it is unimodular, hence self-dual. Thus $\Lambda_0$ is itself a Narain lattice.

By the classification of indefinite even unimodular lattices~\cite[Chapter~II, Section~5]{MR506372}, any two even self-dual lattices of signature $(c,c)$ are isometric. Consequently, every Narain lattice $\Lambda\subset\mathbb R^{c,c}$ can be written as
\[
\Lambda=M\Lambda_0,
\qquad
M\in O(c,c;\mathbb R).
\]
We may always take $M\in SO^+(c,c;\mathbb R)$. Indeed, $\Lambda_0$ splits as an orthogonal sum over the coordinate planes $\mathbb R e_i\oplus\mathbb R e_{c+i}$, and in each such plane the sign changes $x_i\mapsto -x_i$ and $y_i\mapsto -y_i$ preserve $\Lambda_0$: they correspond to $(a_i,b_i)\mapsto(-b_i,-a_i)$ and $(a_i,b_i)\mapsto(b_i,a_i)$, respectively. These maps lie in $O(c,c;\mathbb R)$ and, together with their product and the identity, represent all four connected components. Replacing $M$ by $MA$ for a suitable automorphism $A$ of $\Lambda_0$ leaves $\Lambda=M\Lambda_0$ unchanged and puts $M$ in $SO^+(c,c;\mathbb R)$.

With this normalization, the spectral gap of the Narain CFT associated with $\Lambda=M\Lambda_0$ is
\[
\Delta_1(\Lambda)
=
\frac12
\min_{x\in\Lambda_0\setminus\{0\}}\|Mx\|^2.
\]

\section{Narain CFT for \texorpdfstring{$c=2$}{c=2}}\label{sec:c2}
For $c=2$, we begin with an explicit realization of the exceptional Lie group isomorphism
\[
SO^+(2,2;\mathbb R)
\cong
\bigl(\SL_2(\mathbb R)\times \SL_2(\mathbb R)\bigr)/\{\pm(I,I)\}.
\]

\begin{lemma}\label{iso-c2}
Define a linear map $\Phi_2:\mathbb R^4\longrightarrow M_2(\mathbb R)$ by
\[
\Phi_2(x_1,x_2,x_3,x_4)
=
\frac1{\sqrt2}
\begin{pmatrix}
 x_1+x_3 & x_2+x_4\\
-x_2+x_4 & x_1-x_3
\end{pmatrix}.
\]
Then, for every $M\in SO^+(2,2;\mathbb R)$, there exist $g,h\in\SL_2(\mathbb R)$, unique up to simultaneous sign change, such that
\[
\Phi_2(Mx)=g\,\Phi_2(x)\,h^{-1}
\qquad\text{for all }x\in\mathbb R^4.
\]
\end{lemma}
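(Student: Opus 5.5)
First I check that $\Phi_2$ is an isometry from $(\R^4,J_2)$ to $(M_2(\R),2\det)$. For $x=(x_1,x_2,x_3,x_4)$,
\[
2\det\Phi_2(x)=(x_1+x_3)(x_1-x_3)+(x_2+x_4)(x_2-x_4)=x_1^2+x_2^2-x_3^2-x_4^2 ,
\]
so $\langle x,x\rangle_{J_2}=2\det\Phi_2(x)$. Polarizing, $\langle x,x'\rangle_{J_2}=\tr\bigl(\Phi_2(x)\,\mathrm{adj}(\Phi_2(x'))\bigr)$. Hence $\Phi_2$ conjugates $O(2,2;\R)$ onto the orthogonal group of $(M_2(\R),\det)$.

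Next I build a homomorphism $\rho:\SL_2(\R)\times\SL_2(\R)\to SO^+(2,2;\R)$ by
\[
\rho(g,h)=\Phi_2^{-1}\circ\bigl(X\mapsto gXh^{-1}\bigr)\circ\Phi_2 .
\]
It preserves $J_2$ because $\det(gXh^{-1})=\det X$. Since $\SL_2(\R)\times\SL_2(\R)$ is connected, its image lies in the identity component $SO^+(2,2;\R)$.

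For the kernel, suppose $gXh^{-1}=X$ for all $X$. Taking $X=I$ gives $g=h$, and then $g$ commutes with all of $M_2(\R)$, so $g=\pm I$. Thus $\ker\rho=\{\pm(I,I)\}$. This kernel statement is exactly the uniqueness up to simultaneous sign change.

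For surjectivity onto $SO^+(2,2;\R)$ I use dimensions. Both groups have dimension $6$, since $\dim\mathfrak{so}(2,2)=6=2\dim\mathfrak{sl}_2$. The differential $d\rho$ has trivial kernel, because the kernel of $\rho$ is discrete. Hence $d\rho$ is a Lie algebra isomorphism. The image of $\rho$ is then an open subgroup, and so it is closed, since its complement is a union of open cosets. As $SO^+(2,2;\R)$ is connected, the image is all of it.

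Concretely, the injectivity of $d\rho$ amounts to this: if $AX-XB=0$ for all $X$ with $A,B$ traceless, then $X=I$ gives $A=B$, and then $A$ is central and traceless, so $A=0$. Finally, given $M\in SO^+(2,2;\R)$, I pick $(g,h)$ with $\rho(g,h)=M$. This gives $\Phi_2(Mx)=g\Phi_2(x)h^{-1}$ for all $x$.

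The only step needing care is this surjectivity argument, that is, identifying the image exactly with the identity component. The open-subgroup argument handles it cleanly. An alternative would be to check by hand that the image contains generators such as one-parameter boosts and rotations, but that is unnecessary.
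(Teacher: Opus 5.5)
Your proof is correct and follows essentially the same route as the paper: the identity $2\det\Phi_2(x)=x^TJ_2x$, the homomorphism $\rho(g,h)=\Phi_2^{-1}\circ(X\mapsto gXh^{-1})\circ\Phi_2$ landing in the identity component by connectedness, the kernel computation via $X=I$, and the dimension count combined with the open-subgroup argument for surjectivity. Your polarization formula and your direct check that $d\rho$ is injective are small additions that the paper leaves implicit; they do not change the argument.
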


\begin{proof}
First, the identity
\[
\|\Phi_2(x)\|_F^2=x_1^2+x_2^2+x_3^2+x_4^2
\]
is immediate from the definition, so $\Phi_2$ is a Euclidean isometry. Also,
\begin{align*}
    \det\Phi_2(x) &=\frac12\bigl((x_1+x_3)(x_1-x_3)-(x_2+x_4)(-x_2+x_4)\bigr) \\
    &=\frac12(x_1^2+x_2^2-x_3^2-x_4^2)=\frac12x^TJ_2x.
\end{align*}
Thus $\Phi_2$ identifies the quadratic space $(\mathbb R^4,J_2)$ with $M_2(\mathbb R)$ equipped with the quadratic form $2\det$.

Now let $\SL_2(\mathbb R)\times\SL_2(\mathbb R)$ act on $M_2(\mathbb R)$ by
\[
(g,h)\cdot X=gXh^{-1}.
\]
Since $\det g=\det h=1$, this action preserves $\det X$. Conjugating by $\Phi_2$ gives a homomorphism
\[
\rho:\SL_2(\mathbb R)\times\SL_2(\mathbb R)
\longrightarrow SO^+(2,2;\mathbb R),
\qquad
\rho(g,h)=\Phi_2^{-1}\circ(X\mapsto gXh^{-1})\circ\Phi_2.
\]
The image lies in the identity component because the source is connected.

If $\rho(g,h)=\mathrm{id}$, then $gXh^{-1}=X$ for all $X\in M_2(\mathbb R)$. Taking $X=I$ gives $g=h$, and then $gXg^{-1}=X$ for all $X$, so $g$ is scalar. Since $g\in\SL_2(\mathbb R)$, we get $g=h=\pm I$. Hence the kernel is $\{\pm(I,I)\}$.

Both $\SL_2(\mathbb R)\times\SL_2(\mathbb R)$ and $SO^+(2,2;\mathbb R)$ have dimension $6$. Since the kernel is discrete, the differential of $\rho$ is injective, hence an isomorphism. Therefore the image is an open subgroup of the connected group $SO^+(2,2;\mathbb R)$, and so it is the whole identity component. This proves the asserted representation and the uniqueness up to the kernel.
\end{proof}

We now prove $\Delta_1 \le 2/3$, stated in the introduction.

\begin{proof}[Proof of the bound in Theorem~\ref{thm:c2}]
Let $\Lambda=M\Lambda_0$ be a Narain lattice with $M\in SO^+(2,2;\mathbb R)$, as in Section~\ref{sec:narain}, so that
\[
2\Delta_1(\Lambda)
=
\min_{x\in \Lambda_0\setminus\{0\}}\|Mx\|^2.
\]
By Lemma~\ref{iso-c2}, choose $g,h\in\SL_2(\mathbb R)$ such that
\[
\Phi_2(Mx)=g\Phi_2(x)h^{-1}
\qquad (x\in\mathbb R^4).
\]
For $n=(n_1,n_2,n_3,n_4)^T\in\mathbb Z^4$, a direct calculation gives
\[
\Phi_2(S_2n)=
\begin{pmatrix}
 n_1&n_2\\
-n_4&n_3
\end{pmatrix},
\]
hence $\Phi_2(\Lambda_0)=M_2(\mathbb Z)$. We restrict the minimum to those $x\in \Lambda_0$ for which $\Phi_2(x)$ has rank one. The nonzero rank-one elements of $M_2(\mathbb Z)$ are exactly the matrices $uv^T$, with $u,v\in\mathbb Z^2\setminus\{0\}$: if $X\in M_2(\mathbb Z)$ has rank one, let $u\in\mathbb Z^2$ be a primitive vector spanning its column space; then each column of $X$ is an integral multiple of $u$, so $X=uv^T$ with $v\in\mathbb Z^2$. Therefore
\begin{align*}
\min_{x\in \Lambda_0\setminus\{0\}}\|Mx\|^2 &= \min_{x\in \Lambda_0\setminus\{0\}}\norm{\Phi_2(Mx)}_F^2
\le
\min_{u,v\in\mathbb Z^2\setminus\{0\}}
\|g(uv^T)h^{-1}\|_F^2\\
&=
\min_{u,v\in\mathbb Z^2\setminus\{0\}}
\|(gu)(h^{-T}v)^T\|_F^2\\
&=
\left(\min_{u\in\mathbb Z^2\setminus\{0\}}\|gu\|^2\right)
\left(\min_{v\in\mathbb Z^2\setminus\{0\}}\|h^{-T}v\|^2\right).
\end{align*}
Since $g,h^{-T}\in\SL_2(\mathbb R)$, both define two-dimensional Euclidean lattices of covolume $1$. The Hermite bound in dimension two~\cite{MR1957723} therefore gives
\[
\min_{u\neq0}\|gu\|^2\le \gamma_2,
\qquad
\min_{v\neq0}\|h^{-T}v\|^2\le \gamma_2,
\qquad
\gamma_2=\frac{2}{\sqrt3}.
\]
Thus
\[
\min_{x\in \Lambda_0\setminus\{0\}}\|Mx\|^2
\le \gamma_2^2=\frac43,
\]
and hence
\[
\Delta_1\le \frac12\cdot\frac43=\frac23. \qedhere
\]
\end{proof}

\begin{lemma}\label{narain2-rank1}
Let $\Lambda=M\Lambda_0$ be a Narain lattice for $c=2$ with $M\in SO^+(2,2;\mathbb R)$. If $x\in \Lambda_0\setminus\{0\}$ attains the minimum of $\|Mx\|^2$, then $\Phi_2(x)$ has rank one.
\end{lemma}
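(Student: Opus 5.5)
Write $X=\Phi_2(x)\in M_2(\mathbb Z)\setminus\{0\}$ and $Y=\Phi_2(Mx)=gXh^{-1}$. The plan is to argue by contradiction: assume the minimizer $x$ has $\Phi_2(x)$ of rank two, and show that then $\|Mx\|^2$ exceeds the upper bound $4/3$ from the bound in Theorem~\ref{thm:c2}.

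If $X$ has rank two, then $\det X$ is a nonzero integer. By Lemma~\ref{iso-c2}, $\det Y=\det X$, since $g,h\in\SL_2(\mathbb R)$. This is where the lattice structure enters: for the Narain lattice, $\det Y=\tfrac12 (Mx)^TJ_2(Mx)=s$ is the spin, an integer, and here it is nonzero.

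The key step is the elementary inequality $\|Y\|_F^2\ge 2|\det Y|$ for $Y\in M_2(\mathbb R)$. It follows from the singular values $\sigma_1,\sigma_2$ of $Y$, since
\[
\sigma_1^2+\sigma_2^2\ge 2\sigma_1\sigma_2=2|\det Y|.
\]
Equivalently, $\Delta\ge|s|$, because $\tfrac12(|a|^2+|b|^2)\ge\tfrac12\bigl||a|^2-|b|^2\bigr|$. Hence $\|Mx\|^2\ge 2|\det X|\ge 2$.

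On the other hand, the proof of the bound shows that the minimum of $\|Mx\|^2$ over $\Lambda_0\setminus\{0\}$ is at most $\gamma_2^2=4/3<2$, already realized by rank-one elements. So a rank-two $X$ cannot attain the minimum, a contradiction. Therefore $\Phi_2(x)$ has rank one.

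There is no real obstacle here. The only point needing care is to state correctly that evenness, equivalently integrality of $\det$ on $M_2(\mathbb Z)$, forces $|\det X|\ge1$, and to compare the resulting lower bound $2$ with the upper bound $4/3$ rather than with the unknown exact minimum.
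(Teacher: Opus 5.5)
Your proposal is correct and is essentially the paper's argument. Your inequality $\|Y\|_F^2\ge 2|\det Y|$ is exactly $|v^TJ_2v|\le\|v\|^2$ transported through $\Phi_2$, and integrality of $\det X$ on $M_2(\mathbb Z)$ is the evenness of $\Lambda_0$, so comparing $2$ with $4/3$ is the paper's deduction that $x^TJ_2x\in 2\mathbb Z$ with $|x^TJ_2x|\le 4/3$ forces $\det\Phi_2(x)=0$, merely phrased by contradiction.
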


\begin{proof}
By the bound just proved, any minimum vector $x$ satisfies
\[
\|Mx\|^2\le \frac43.
\]
Since $M^TJ_2M=J_2$, we have
\[
|x^TJ_2x|=|(Mx)^TJ_2(Mx)|\le \|Mx\|^2\le \frac43.
\]
On the other hand, $x\in\Lambda_0$ and $\Lambda_0$ is even, so $x^TJ_2x\in 2\mathbb Z$. Hence $x^TJ_2x=0$. Since $2\det\Phi_2(x)=x^TJ_2x$, we get $\det\Phi_2(x)=0$. Finally $x\neq0$ and $\Phi_2$ is an isomorphism, so $\Phi_2(x)\neq0$. Thus $\Phi_2(x)$ is a nonzero singular $2\times2$ matrix, hence has rank one.
\end{proof}

Lemma~\ref{narain2-rank1} shows that the minimum is attained on rank-one matrices, so the inequality in the proof of Theorem~\ref{thm:c2} obtained by restricting the minimum to rank-one matrices is an equality.
\begin{corollary}\label{c2-formula}
Let $\Lambda=M\Lambda_0$ be a Narain lattice for $c=2$ with $M\in SO^+(2,2;\mathbb R)$, and let $g,h\in\SL_2(\mathbb R)$ be as in Lemma~\ref{iso-c2}. Then
\[
\Delta_1(\Lambda)
=\frac12\left(\min_{u\in\mathbb Z^2\setminus\{0\}}\|gu\|^2\right)
\left(\min_{v\in\mathbb Z^2\setminus\{0\}}\|h^{-T}v\|^2\right).
\]
\end{corollary}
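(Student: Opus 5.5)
The plan is to combine the chain of (in)equalities from the proof of the bound in Theorem~\ref{thm:c2} with Lemma~\ref{narain2-rank1}, and check that each step of that chain is an equality.

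First, use that $\Phi_2$ is a Euclidean isometry and that $\Phi_2(Mx)=g\Phi_2(x)h^{-1}$. These give
\[
2\Delta_1(\Lambda)=\min_{X\in M_2(\mathbb Z)\setminus\{0\}}\|gXh^{-1}\|_F^2 .
\]
Here the identification $\Phi_2(\Lambda_0)=M_2(\mathbb Z)$ was computed in the proof of the bound.

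Next, by Lemma~\ref{narain2-rank1}, any minimizer $X=\Phi_2(x)$ has rank one. It is therefore of the form $uv^T$ with $u,v\in\mathbb Z^2\setminus\{0\}$, by the description of integral rank-one matrices already given. Conversely, every such $uv^T$ is a nonzero element of $M_2(\mathbb Z)$. So the minimum over all nonzero $X$ equals the minimum over the rank-one matrices $uv^T$. The inequality obtained by restricting the minimum is thus an equality.

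Finally, compute $\|(gu)(h^{-T}v)^T\|_F^2=\|gu\|^2\,\|h^{-T}v\|^2$, using that the Frobenius norm of an outer product $ab^T$ is $|a|\,|b|$. Since $u$ and $v$ range independently over $\mathbb Z^2\setminus\{0\}$ and both factors are positive, the minimum of the product is the product of the minima. Each of these minima is attained, because $g\mathbb Z^2$ and $h^{-T}\mathbb Z^2$ are discrete lattices. Dividing by $2$ gives the stated formula.

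There is no real obstacle. The only point needing care is that the minimum in Lemma~\ref{narain2-rank1} is attained, which again holds by discreteness of $\Lambda$.
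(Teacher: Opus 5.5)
Your proposal is correct and follows essentially the same route as the paper: Lemma~\ref{narain2-rank1} forces every minimizer to be of rank one, so the only inequality in the proof of the bound in Theorem~\ref{thm:c2} (restricting to the matrices $uv^T$) is an equality, and the remaining steps of that chain were already equalities. Your remarks that the minima are attained by discreteness, and that the minimum of a product of independent positive factors is the product of the minima, simply make explicit what the paper leaves implicit.
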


Using this formula, we now determine the equality case in Theorem~\ref{thm:c2}.

\begin{proof}[Proof of the equality case in Theorem~\ref{thm:c2}]
By Corollary~\ref{c2-formula},
\[
2\Delta_1(\Lambda)
=\left(\min_{u\in\mathbb Z^2\setminus\{0\}}\|gu\|^2\right)
\left(\min_{v\in\mathbb Z^2\setminus\{0\}}\|h^{-T}v\|^2\right),
\]
and each factor is at most $\gamma_2=2/\sqrt3$. Hence $\Delta_1(\Lambda)=2/3$ if and only if both factors equal $\gamma_2$, that is, if and only if both $g\mathbb Z^2$ and $h^{-T}\mathbb Z^2$ are hexagonal, since the hexagonal lattice is the only planar lattice attaining $\gamma_2$~\cite{MR1957723}.

All such Narain lattices are equivalent under $O(2)\times O(2)$. Since any two hexagonal lattices with covolume $1$ differ by a rotation, for another equality case we have $g'=Rg\gamma$ and $h'^{-T}=R'h^{-T}\gamma'$ with $R,R'\in SO(2)$ and $\gamma,\gamma'\in\SL_2(\mathbb Z)$, and hence
\[
g'M_2(\mathbb Z)h'^{-1}=R\bigl(gM_2(\mathbb Z)h^{-1}\bigr)R'^{-1}.
\]
The map $X\mapsto RXR'^{-1}$ preserves the Frobenius norm and $\det$, so under $\Phi_2$ it is an element of $O(2)\times O(2)$.

For a representative, let $A_2=B\mathbb Z^2$ with $B=\begin{pmatrix}\sqrt2&-1/\sqrt2\\0&\sqrt{3/2}\end{pmatrix}$, and take $g=h=3^{-1/4}B$. Then $g\mathbb Z^2=3^{-1/4}A_2$ and $h^{-T}\mathbb Z^2=3^{1/4}A_2^*$ are both hexagonal, and $\Phi_2(\rho(g,h)\Lambda_0)=B\,M_2(\mathbb Z)\,B^{-1}$ has the basis $b_i(b_j^*)^T$ $(1\le i,j\le2)$, where $b_1,b_2$ and $b_1^*,b_2^*$ are the columns of $B$ and $B^{-T}$, that is, bases of $A_2$ and $A_2^*$. Writing $\Phi_2^{-1}\bigl(b_i(b_j^*)^T\bigr)=(x,y)$ and expressing $x,y$ in the basis $b_1^*,b_2^*$ and $x-y$ in the basis $b_1,b_2$, a direct computation gives
\[
\renewcommand{\arraystretch}{1.3}
\setlength{\arraycolsep}{9pt}
\begin{array}{c|ccc}
(i,j) & x & y & x-y\\\hline
(1,1) & (1,0) & (1,0) & (0,0)\\
(1,2) & (0,1) & (0,1) & (0,0)\\
(2,1) & (0,-1) & (-1,1) & (0,-1)\\
(2,2) & (1,-1) & (-1,0) & (1,0)
\end{array}
\]
In every row, all coordinates are integers, so $x,y\in A_2^*$ and $x-y\in A_2$.
Hence $\rho(g,h)\Lambda_0$ is contained in $\Lambda(A_2)=\{(x,y)\in A_2^*\oplus A_2^*:\ x-y\in A_2\}$, and since both lattices have covolume one, they coincide. Therefore, up to $O(2)\times O(2)$, equality holds precisely for $\Lambda(A_2)$, the Narain lattice of the $\SU(3)_1$ WZW model.
\end{proof}

Using further classical facts about two-dimensional lattices, we can also describe the local maxima of $\Delta_1$ on the Narain moduli space.

\begin{theorem}\label{narain2-local}
For $c=2$, $\Delta_1$ attains a local maximum at $M\Lambda_0$ precisely when $g\mathbb Z^2$ and $h^{-T}\mathbb Z^2$ are both hexagonal. In particular, every local maximum of $\Delta_1$ is global.
\end{theorem}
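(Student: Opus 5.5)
By Corollary~\ref{c2-formula}, the whole problem factors through the function
\[
m(L)=\min_{u\in L\setminus\{0\}}\|u\|^2
\]
on covolume-one planar lattices. The plan is to show that $\Delta_1$ has a local maximum at $(g,h)$ exactly when $m$ has a local maximum at both $g\mathbb Z^2$ and $h^{-T}\mathbb Z^2$, and then to prove the classical fact that $m$ is locally maximal only at the hexagonal lattice.

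First, the topology. The parametrization $(g,h)\mapsto \rho(g,h)\Lambda_0$ from Lemma~\ref{iso-c2} is a continuous open surjection from $\SL_2(\mathbb R)\times\SL_2(\mathbb R)$ onto the space of Narain lattices; it is a local homeomorphism modulo the discrete stabilizer. So a neighbourhood of $M\Lambda_0$ corresponds to a neighbourhood of $(g,h)$. We have $2\Delta_1=m(g\mathbb Z^2)\,m(h^{-T}\mathbb Z^2)$, and the maps $g\mapsto g\mathbb Z^2$ and $h\mapsto h^{-T}\mathbb Z^2$ are independent, open, and continuous. Since both factors are positive, varying one argument while fixing the other gives the conclusion: $\Delta_1$ is locally maximal at $(g,h)$ if and only if $m$ is locally maximal at each of the two planar lattices. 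The converse direction is immediate from the product of local maxima of positive functions.

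Second, the planar statement, which I expect to be the main step. If $L$ is hexagonal, then $m(L)=\gamma_2$ is the global maximum, so $L$ is a local maximum. Conversely, suppose $L$ has covolume one and is not hexagonal; I will exhibit arbitrarily small deformations that increase $m$.
\begin{itemize}
\item Let $S$ be the set of minimal vectors of $L$, taken up to sign. Since $L$ is not hexagonal, $S$ consists of either one or two pairs.
\item Choose a Gauss-reduced basis $e_1,e_2$ with $|e_1|\le|e_2|$ and $|2e_1\cdot e_2|\le|e_1|^2$. The minimal vectors then lie among $\pm e_1$ and $\pm e_2$, since $\pm(e_1\pm e_2)$ is minimal only in the hexagonal case.
\item With one pair $\pm e_1$: apply the area-preserving map $\operatorname{diag}(1+\epsilon,(1+\epsilon)^{-1})$ in coordinates where $e_1$ is horizontal. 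This lengthens $e_1$, and for small $\epsilon$ no other vector becomes shorter than the new $|e_1|$, because the remaining vectors are strictly longer and only finitely many norms are relevant.
\item With two pairs $\pm e_1,\pm e_2$, $|e_1|=|e_2|$ and angle $\theta\in(\pi/3,\pi/2]$: the covolume is $|e_1|^2\sin\theta=1$. Decreasing $\theta$ slightly while keeping $|e_1|=|e_2|$ and rescaling to covolume one increases $|e_1|^2=1/\sin\theta$. The vectors $e_1\pm e_2$ remain strictly longer for small changes, since $\theta>\pi/3$.
\end{itemize}
In both cases $m$ strictly increases arbitrarily close to $L$, so $L$ is not a local maximum. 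The only care needed is the finiteness argument: the minimum near $L$ is attained among finitely many lattice vectors, so strict inequalities persist under small perturbations.

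Combining the two steps, a local maximum occurs exactly when both $g\mathbb Z^2$ and $h^{-T}\mathbb Z^2$ are hexagonal. That is precisely the equality case of Theorem~\ref{thm:c2}, so every local maximum is global.
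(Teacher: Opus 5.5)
Your proposal is correct. Its first half is exactly the paper's argument: the reduction via Corollary~\ref{c2-formula}, the openness and continuity of the parametrization by $(g,h)$, and the observation that a product of two positive functions of independent variables is locally maximal exactly when each factor is.

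The difference is in the planar step. The paper cites Voronoi's theorem: the local maxima of the Hermite invariant are the perfect and eutactic lattices, and in dimension two the hexagonal lattice is the only one. You instead prove the planar fact by hand from a Gauss-reduced basis. You split into the cases of one or two pairs of minimal vectors and exhibit explicit area-preserving deformations that strictly increase $m$: a diagonal stretch along $e_1$ in the first case, and closing the angle $\theta\in(\pi/3,\pi/2]$ between two equal minimal vectors in the second. Discreteness of the set of norms then guarantees that no other vector overtakes the minimum.

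Your version is self-contained and uses nothing beyond two-dimensional reduction theory. It also shows directly that a non-hexagonal lattice is not even a non-strict local maximum. The paper's citation is shorter, and it sets up the same template used for $c=3$ in Theorem~\ref{c3-local}. There, local maxima are characterized through Coulangeon's $2$-perfect and $2$-eutactic criterion, and no elementary case analysis like yours is available, since the $2$-extreme lattices in dimension four have not been classified.
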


\begin{proof}
By Corollary~\ref{c2-formula},
\[
2\Delta_1(\Lambda)=\left(\min_{u\in\mathbb Z^2\setminus\{0\}}\|gu\|^2\right) \left(\min_{v\in\mathbb Z^2\setminus\{0\}}\|h^{-T}v\|^2\right).
\]
Since $\rho$ is a surjective local diffeomorphism (Lemma~\ref{iso-c2}), local maxima of $\Delta_1$ in $M$ correspond to local maxima in $(g,h)\in\SL_2(\mathbb R)\times\SL_2(\mathbb R)$, so the two factors vary independently over all unimodular planar lattices. As both factors are positive, the product has a local maximum precisely when each factor does. By Voronoi's theorem, the local maxima of the Hermite invariant are exactly the perfect and eutactic lattices, and in dimension two the only such lattice is the hexagonal lattice~\cite{MR1957723}. Therefore $\Delta_1$ attains a local maximum precisely when both lattices are hexagonal, and by the equality case of Theorem~\ref{thm:c2}, this happens exactly when $\Delta_1=2/3$.
\end{proof}

\section{Narain CFT for \texorpdfstring{$c=3$}{c=3}}\label{sec:c3}

For $c=3$, we prove the analogous theorem based on the exceptional Lie group isomorphism
\[
SO^+(3,3;\mathbb R)\cong\SL_4(\mathbb R)/\{\pm I\}.
\]

\begin{lemma}\label{iso-c3}
Let $e_1,\dots,e_4$ be the standard basis of $\mathbb R^4$, write $e_{ij}=e_i\wedge e_j$, and let
\[
(f_1,\dots,f_6)=(e_{23},e_{31},e_{12},e_{14},e_{24},e_{34}).
\]
Define a linear map $\Phi_3:\mathbb R^6\longrightarrow\bigwedge\nolimits^2\mathbb R^4$ by
\[
\Phi_3(x)=\frac1{\sqrt2}\sum_{i=1}^3\Bigl((x_i+x_{i+3})\,f_i+(x_i-x_{i+3})\,f_{i+3}\Bigr).
\]
Then, for every $M\in SO^+(3,3;\mathbb R)$, there exists $g\in\SL_4(\mathbb R)$, unique up to sign, such that
\[
\Phi_3(Mx)=g\cdot\Phi_3(x)\qquad\text{for all }x\in\mathbb R^6,
\]
where $g$ acts on $\bigwedge^2\mathbb R^4$ by $g\cdot(u\wedge v)=(gu)\wedge(gv)$.
\end{lemma}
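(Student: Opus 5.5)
We follow the same pattern as the proof of Lemma~\ref{iso-c2}: first identify the quadratic space, then build a homomorphism from $\SL_4(\mathbb R)$, compute its kernel, and conclude by a dimension count and connectedness.

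\emph{Step 1: the quadratic form.} Equip $\bigwedge^2\mathbb R^4$ with the symmetric bilinear form defined by $\omega\wedge\eta=B(\omega,\eta)\,e_1\wedge e_2\wedge e_3\wedge e_4$. In the basis $f_1,\dots,f_6$ the only nonzero pairings are
\[
e_{23}\wedge e_{14},\qquad e_{31}\wedge e_{24},\qquad e_{12}\wedge e_{34}.
\]
Each of these equals $e_{1234}$; this uses $e_{31}\wedge e_{24}=e_3\wedge e_1\wedge e_2\wedge e_4=e_{1234}$, which must be checked with some care. Hence $B(f_i,f_{i+3})=1$ for $i=1,2,3$ and all other pairings of basis vectors vanish. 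Writing $\Phi_3(x)=\sum_i(a_if_i+b_if_{i+3})$ with $a_i=(x_i+x_{i+3})/\sqrt2$ and $b_i=(x_i-x_{i+3})/\sqrt2$, we get
\[
\tfrac12B(\Phi_3x,\Phi_3x)=\sum_i a_ib_i=\tfrac12\sum_i(x_i^2-x_{i+3}^2)=\tfrac12x^TJ_3x.
\]
So $\Phi_3$ is an isometry from $(\mathbb R^6,J_3)$ onto $(\bigwedge^2\mathbb R^4,B)$. It is also a Euclidean isometry, since $\sum_i(a_i^2+b_i^2)=|x|^2$, although only the first property is needed here.

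\emph{Step 2: the homomorphism.} For $g\in\SL_4(\mathbb R)$, the induced map $\bigwedge^2g$ satisfies
\[
(g\omega)\wedge(g\eta)=\bigwedge\nolimits^4g\,(\omega\wedge\eta)=\det g\,(\omega\wedge\eta)=\omega\wedge\eta,
\]
so it preserves $B$. Define
\[
\rho(g)=\Phi_3^{-1}\circ\bigwedge\nolimits^2g\circ\Phi_3\in O(3,3;\mathbb R).
\]
This is a continuous homomorphism, and its image lies in $SO^+(3,3;\mathbb R)$ because $\SL_4(\mathbb R)$ is connected.

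\emph{Step 3: the kernel.} Suppose $g\cdot(u\wedge v)=u\wedge v$ for all $u,v$. Then $g$ preserves every $2$-plane in $\mathbb R^4$, since a decomposable bivector determines its plane. Each line is the intersection of two such planes, so $g$ preserves every line, and therefore $g=\lambda I$. Then $\lambda^2=1$, and since $\det g=\lambda^4=1$ automatically, $g=\pm I$. Both signs do act trivially on $\bigwedge^2$, so the kernel is exactly $\{\pm I\}$. This gives uniqueness up to sign.

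\emph{Step 4: surjectivity.} Both groups have dimension $15$, since $\dim\mathfrak{so}(3,3)=6\cdot5/2=15=\dim\mathfrak{sl}_4$. Because the kernel is discrete, $d\rho$ is injective and hence bijective. The image of $\rho$ is therefore an open subgroup of the connected group $SO^+(3,3;\mathbb R)$, so it is the whole group.

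The main obstacle is only the sign bookkeeping in Step 1, namely confirming that all three pairings $B(f_i,f_{i+3})$ equal $+1$, so that the form is exactly $\tfrac12x^TJ_3x$ rather than a form of a different signature pattern. The remaining steps are formal and transcribe directly from Lemma~\ref{iso-c2}.
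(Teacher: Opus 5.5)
Your proof is correct and follows essentially the same route as the paper: you identify $(\mathbb R^6,J_3)$ with $\bigwedge^2\mathbb R^4$ under the wedge pairing, define $\rho$ on $\SL_4(\mathbb R)$, show the kernel is $\{\pm I\}$ because such a $g$ fixes every plane and hence every line, and conclude from $\dim=15$ on both sides together with connectedness. Your sign checks $B(f_i,f_{i+3})=1$ are right; in particular $e_{31}\wedge e_{24}=+\Omega$ because the permutation involved is cyclic. Checking only the quadratic form, rather than the bilinear identity the paper states, is enough by polarization.
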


\begin{proof}
Equip $\bigwedge^2\mathbb R^4$ with the Euclidean inner product for which $(f_1,\dots,f_6)$ is orthonormal. The identity
\[
\norm{\Phi_3(x)}^2=x_1^2+\dots+x_6^2
\]
is immediate from the definition, so $\Phi_3$ is a Euclidean isometry. Also, writing $\Omega=e_1\wedge e_2\wedge e_3\wedge e_4$, a direct calculation gives
\[
f_i\wedge f_j=0,
\qquad
f_{i+3}\wedge f_{j+3}=0,
\qquad
f_i\wedge f_{j+3}=\delta_{ij}\,\Omega
\qquad(1\le i,j\le3).
\]
Since bivectors commute under the wedge product, this yields
\[
\Phi_3(x)\wedge\Phi_3(y)=(x^TJ_3y)\,\Omega
\qquad(x,y\in\mathbb R^6).
\]
Thus $\Phi_3$ identifies the quadratic space $(\mathbb R^6,J_3)$ with $\bigwedge^2\mathbb R^4$ equipped with the symmetric bilinear form $B$ defined by $\omega\wedge\omega'=B(\omega,\omega')\,\Omega$.
 
Now let $\SL_4(\mathbb R)$ act on $\bigwedge^2\mathbb R^4$ by $g\cdot(u\wedge v)=(gu)\wedge(gv)$. Since $\det g=1$, for $\omega, \omega' \in \bigwedge^2\mathbb R^4$, we have
\[(g\cdot\omega)\wedge(g\cdot\omega')=(\det g)\,\omega\wedge\omega'
=\omega\wedge\omega',\] so this action preserves the quadratic form. Conjugating by $\Phi_3$ gives a homomorphism
\[
\rho:\SL_4(\mathbb R)\longrightarrow SO^+(3,3;\mathbb R),
\qquad
\rho(g)=\Phi_3^{-1}\circ(\omega\mapsto g\cdot\omega)\circ\Phi_3 .
\]
The image lies in the identity component because the source is connected.
 
If $\rho(g)=\mathrm{id}$, then $(gu)\wedge(gv)=u\wedge v$ for all $u,v\in\mathbb R^4$, so $g$ preserves every two-dimensional subspace of $\mathbb R^4$, hence every line, and $g$ is scalar. Since $g\in\SL_4(\mathbb R)$, we get $g=\pm I$. Hence the kernel is $\{\pm I\}$.
 
Both $\SL_4(\mathbb R)$ and $SO^+(3,3;\mathbb R)$ have dimension $15$. Since the kernel is discrete, the differential of $\rho$ is injective, hence an isomorphism. Therefore the image is an open subgroup of the connected group $SO^+(3,3;\mathbb R)$, and so it is the whole identity component. This proves the asserted representation and the uniqueness up to the kernel.
\end{proof}

We now prove $\Delta_1 \le 3/4$, stated in the introduction.

\begin{proof}[Proof of the bound in Theorem~\ref{thm:c3}]
Let $\Lambda=M\Lambda_0$ be a Narain lattice with $M\in SO^+(3,3;\mathbb R)$, as in Section~\ref{sec:narain}, so that
\[
2\Delta_1(\Lambda)=\min_{x\in\Lambda_0\smallsetminus\{0\}}\|Mx\|^2 .
\]
By Lemma~\ref{iso-c3}, choose $g\in\SL_4(\mathbb R)$ such that
\[
\Phi_3(Mx)=g\cdot\Phi_3(x)\qquad(x\in\mathbb R^6).
\]
For $n=(n_1,\dots,n_6)^T\in\mathbb Z^6$, we have
\begin{align*}
\Phi_3(S_3n)&=n_1\,e_2\wedge e_3+n_2\,e_3\wedge e_1+n_3\,e_1\wedge e_2 \\
&+n_4\,e_1\wedge e_4+n_5\,e_2\wedge e_4+n_6\,e_3\wedge e_4,
\end{align*}
hence $\Phi_3(\Lambda_0)=\bigwedge^2\mathbb Z^4$. We restrict the minimum to those $x\in\Lambda_0$ for which $\Phi_3(x)$ is decomposable. The nonzero decomposable elements of $\bigwedge^2\mathbb Z^4$ are exactly the $u\wedge v$ with $u,v\in\mathbb Z^4$ linearly independent: if $\omega\in\bigwedge^2\mathbb Z^4$ is nonzero and decomposable over $\mathbb R$, then the plane $P=\{w\in\mathbb R^4:w\wedge\omega=0\}$ is rational, and extending a basis $u_0,v_0$ of $P\cap\mathbb Z^4$ to a basis of $\mathbb Z^4$ shows that $\omega=m\,u_0\wedge v_0=(mu_0)\wedge v_0$ for some $m\in\mathbb Z$. Therefore
\begin{align*}
\min_{x\in\Lambda_0\smallsetminus\{0\}}\|Mx\|^2 &= \min_{x\in\Lambda_0\smallsetminus\{0\}}\norm{\Phi_3(Mx)}^2
\le\min_{\substack{u,v\in\mathbb Z^4\\u\wedge v\neq0}}\|g\cdot(u\wedge v)\|^2\\
&=\min_{\substack{u,v\in\mathbb Z^4\\u\wedge v\neq0}}\|(gu)\wedge(gv)\|^2 .
\end{align*}
Since $\|a\wedge b\|^2=\|a\|^2\|b\|^2-(a\cdot b)^2$, the right-hand side is the minimal squared covolume of a rank-two sublattice of $g\mathbb Z^4$, which is a four-dimensional Euclidean lattice of covolume $1$. The Hermite--Rankin bound in dimension four~\cite[Theorem~5]{MR55380}  therefore gives
\[
\min_{\substack{u,v\in\mathbb Z^4\\u\wedge v\neq0}}\|(gu)\wedge(gv)\|^2
\le\gamma_{4,2},
\qquad
\gamma_{4,2}=\frac32.
\]
Thus
\[
\min_{x\in\Lambda_0\smallsetminus\{0\}}\|Mx\|^2\le\frac32,
\]
and hence
\[
\Delta_1\le\frac12\cdot\frac32=\frac34. \qedhere
\]
\end{proof}

\begin{lemma}\label{narain3-decomp}
Let $\Lambda=M\Lambda_0$ be a Narain lattice for $c=3$ with $M\in SO^+(3,3;\mathbb R)$. If $x\in\Lambda_0\smallsetminus\{0\}$ attains the minimum of $\|Mx\|^2$, then $\Phi_3(x)$ is decomposable.
\end{lemma}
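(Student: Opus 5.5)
The argument parallels Lemma~\ref{narain2-rank1}. The extra ingredient is the Plücker relation: a bivector $\omega\in\bigwedge^2\mathbb R^4$ is decomposable if and only if $\omega\wedge\omega=0$. This holds because any bivector in $\mathbb R^4$ can be written as $a\wedge b+c\wedge d$, whose square is $2\,a\wedge b\wedge c\wedge d$, and that vanishes exactly when the form has rank at most two. So it suffices to show that a minimal vector $x$ satisfies $x^TJ_3x=0$. By Lemma~\ref{iso-c3} this quantity is the coefficient of $\Omega$ in $\Phi_3(x)\wedge\Phi_3(x)$.

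First I use the bound just proved: any minimum vector satisfies $\|Mx\|^2\le 3/2$. Since $M^TJ_3M=J_3$, the Cauchy--Schwarz-type estimate gives
\[
|x^TJ_3x|=|(Mx)^TJ_3(Mx)|=\bigl||y|^2-|z|^2\bigr|\le |y|^2+|z|^2=\|Mx\|^2\le\tfrac32,
\]
where $Mx=(y,z)$.

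Next, $\Lambda_0$ is even, so $x^TJ_3x\in2\mathbb Z$. An even integer of absolute value at most $3/2$ is zero, so $x^TJ_3x=0$. Consequently $\Phi_3(x)\wedge\Phi_3(x)=0$.

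Finally, I apply the Plücker criterion to conclude that $\Phi_3(x)$ is decomposable. It is nonzero because $\Phi_3$ is injective, so $\Phi_3(x)=u\wedge v$ with $u,v$ linearly independent.

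The only step needing care is the Plücker criterion. I will justify it briefly via the normal form of an alternating form: choose a basis in which $\omega=\lambda_1 e_1\wedge e_2+\lambda_2 e_3\wedge e_4$. Then $\omega\wedge\omega=2\lambda_1\lambda_2\Omega$, so $\omega\wedge\omega=0$ forces one of $\lambda_1,\lambda_2$ to vanish. Everything else is identical to the $c=2$ case, with the bound $4/3$ replaced by $3/2$, which is still below $2$.
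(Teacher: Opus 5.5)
Your proof is correct and follows the paper's argument step for step. The bound $\|Mx\|^2\le 3/2$, together with the invariance of $J_3$ and the evenness of $\Lambda_0$, forces $x^TJ_3x=0$, hence $\Phi_3(x)\wedge\Phi_3(x)=0$, and the Pl\"ucker relation then gives decomposability. Your added justification of the Pl\"ucker criterion via the normal form $\lambda_1e_1\wedge e_2+\lambda_2e_3\wedge e_4$ is correct and simply fills in what the paper cites.
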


\begin{proof}
By the bound just proved, any minimum vector $x$ satisfies
\[
\|Mx\|^2\le\frac32 .
\]
Since $M^TJ_3M=J_3$, we have
\[
|x^TJ_3x|=|(Mx)^TJ_3(Mx)|\le\|Mx\|^2\le\frac32 .
\]
On the other hand, $x\in\Lambda_0$ and $\Lambda_0$ is even, so $x^TJ_3x\in2\mathbb Z$. Hence $x^TJ_3x=0$. Since $\Phi_3(x)\wedge\Phi_3(x)=(x^TJ_3x)\,\Omega$, we get $\Phi_3(x)\wedge\Phi_3(x)=0$. Finally $x\neq0$ and $\Phi_3$ is an isomorphism, so $\Phi_3(x)\neq0$. Thus $\Phi_3(x)$ is a nonzero bivector in $\bigwedge^2\mathbb R^4$ with $\Phi_3(x)\wedge\Phi_3(x)=0$, hence decomposable by the Pl\"ucker relation.
\end{proof}

Lemma~\ref{narain3-decomp} shows that the minimum is attained on decomposable bivectors, so the inequality in the proof of Theorem~\ref{thm:c3} obtained by restricting the minimum to decomposable bivectors is an equality.

\begin{corollary}\label{c3-formula}
Let $\Lambda=M\Lambda_0$ be a Narain lattice for $c=3$ with $M\in SO^+(3,3;\mathbb R)$, and let $g\in\SL_4(\mathbb R)$ be as in Lemma~\ref{iso-c3}. Then
\[
\Delta_1(\Lambda)
=\frac12\min_{\substack{u,v\in\mathbb Z^4\\u\wedge v\neq0}}\|(gu)\wedge(gv)\|^2,
\]
that is, half the minimal squared covolume of a rank-two sublattice of $g\mathbb Z^4$.
\end{corollary}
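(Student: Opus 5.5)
The corollary follows by combining the restricted inequality in the proof of the bound in Theorem~\ref{thm:c3} with Lemma~\ref{narain3-decomp}. Write $m(g)$ for the right-hand minimum over $u,v\in\mathbb Z^4$ with $u\wedge v\neq0$.

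First, the proof of the bound already gives $2\Delta_1(\Lambda)\le m(g)$. Indeed $\Phi_3$ is a Euclidean isometry with $\Phi_3(Mx)=g\cdot\Phi_3(x)$ and $\Phi_3(\Lambda_0)=\bigwedge^2\mathbb Z^4$. Every nonzero decomposable element $u\wedge v$ of $\bigwedge^2\mathbb Z^4$ therefore equals $\Phi_3(x)$ for some $x\in\Lambda_0\setminus\{0\}$, and $\|Mx\|^2=\|(gu)\wedge(gv)\|^2$.

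For the reverse inequality, note that $\Lambda_0$ is a discrete lattice, so the minimum of $\|Mx\|^2$ over $x\in\Lambda_0\setminus\{0\}$ is attained at some $x$. By Lemma~\ref{narain3-decomp}, $\Phi_3(x)$ is decomposable. It is also a nonzero element of $\bigwedge^2\mathbb Z^4$, so by the integrality remark in the proof of the bound it can be written as $u\wedge v$ with $u,v\in\mathbb Z^4$ linearly independent. Then
\[
2\Delta_1(\Lambda)=\|Mx\|^2=\|(gu)\wedge(gv)\|^2\ge m(g).
\]
Combining the two inequalities gives equality.

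Finally, $\|a\wedge b\|^2=\|a\|^2\|b\|^2-(a\cdot b)^2$ is the Gram determinant of $a,b$, i.e.\ the squared covolume of $\Z a+\Z b$. Any rank-two sublattice of $g\Z^4$ has the form $\Z gu+\Z gv$ for independent integral $u,v$, so $m(g)$ is exactly the minimal squared covolume of a rank-two sublattice of $g\Z^4$.

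There is no real obstacle here. The only point needing care is the integral decomposability step (rational plane, extend basis), and that is already supplied in the proof of the bound.
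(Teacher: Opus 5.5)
Your proof is correct and matches the paper's argument: the paper likewise notes that Lemma~\ref{narain3-decomp} makes the restricted inequality in the proof of the bound in Theorem~\ref{thm:c3} an equality. Your explicit two-sided inequality and the identification with the minimal squared covolume of a rank-two sublattice spell out what the paper leaves implicit.
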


Using this formula, we now determine the equality case in Theorem~\ref{thm:c3}.

\begin{proof}[Proof of the equality case in Theorem~\ref{thm:c3}]
By Corollary~\ref{c3-formula}, $\Delta_1(\Lambda)=3/4$ if and only if $g\mathbb Z^4$ attains $\gamma_{4,2}$, that is, if and only if $g\mathbb Z^4$ is similar to $D_4$, since $D_4$ is the only lattice attaining $\gamma_{4,2}$~\cite[Theorem~5]{MR55380}.\footnote{Rankin shows that equality holds exactly for the forms equivalent to a multiple of $f^*(x)=x_1^2+x_2^2+x_3^2+x_4^2+x_1x_2+x_1x_3+x_1x_4+x_2x_3+x_3x_4$, and the Gram matrix of the basis $e_1+e_2,\ e_1+e_3,\ e_1+e_4,\ e_2+e_4$ of $D_4$ represents the quadratic form $2f^*$.}

All such Narain lattices are equivalent under $O(3)\times O(3)$. Any two covolume-one lattices similar to $D_4$ differ by an orthogonal map, which can be taken in $SO(4)$ because $D_4$ has a reflection automorphism. Hence for another equality case we have $g'=Rg\gamma$ with $R\in SO(4)$ and $\gamma\in\SL_4(\mathbb Z)$, and
\[
g'\cdot\textstyle\bigwedge^2\mathbb Z^4=R\cdot\bigl(g\cdot\bigwedge^2\mathbb Z^4\bigr).
\]
The map $\omega\mapsto R\cdot\omega$ preserves the Euclidean inner product and the wedge pairing, so under $\Phi_3$ it is an element of $O(3)\times O(3)$.

For a representative, let $D_4=\{x\in\mathbb Z^4:\sum_ix_i\in2\mathbb Z\}$ and take
\[
g=2^{-1/4}\begin{pmatrix}
 1& 0& 0&0\\
-1& 1& 0&0\\
 0&-1& 1&1\\
 0& 0&-1&1
\end{pmatrix}.
\]
Then $g\mathbb Z^4=2^{-1/4}D_4$, and $\Phi_3(\rho(g)\Lambda_0)=g\cdot\bigwedge^2\mathbb Z^4$ has the basis $2^{-1/2}\,d_i\wedge d_j$ $(1\le i<j\le4)$, where $d_1,\dots,d_4\in D_4$ are the columns of the integral matrix $2^{1/4}g$. Take $A_3$ as $D_3=\{x\in\mathbb Z^3:\sum_ix_i\in2\mathbb Z\}$, so that $A_3^*=\mathbb Z^3\cup(\mathbb Z+\tfrac12)^3$. Writing $\Phi_3^{-1}\bigl(2^{-1/2}\,d_i\wedge d_j\bigr)=(x,y)$, a direct computation gives
\[
\renewcommand{\arraystretch}{1.3}
\setlength{\arraycolsep}{9pt}
\begin{array}{c|ccc}
(i,j) & x & y & x-y\\\hline
(1,2) & \tfrac12(1,1,1) & \tfrac12(1,1,1) & (0,0,0)\\
(1,3) & (-1,0,0) & (0,-1,0) & (-1,1,0)\\
(1,4) & (0,-1,0) & (-1,0,0) & (1,-1,0)\\
(2,3) & \tfrac12(1,-1,1) & \tfrac12(1,1,-1) & (0,-1,1)\\
(2,4) & \tfrac12(1,1,-1) & \tfrac12(1,-1,1) & (0,1,-1)\\
(3,4) & (0,0,1) & (0,0,-1) & (0,0,2)
\end{array}
\]
In every row, $x,y\in A_3^*$ and $x-y\in A_3$.
Hence $\rho(g)\Lambda_0$ is contained in $\Lambda(A_3)=\{(x,y)\in A_3^*\oplus A_3^*:\ x-y\in A_3\}$, and since both lattices have covolume one, they coincide. Therefore, up to $O(3)\times O(3)$, equality holds precisely for $\Lambda(A_3)$, the Narain lattice of the $\mathrm{SU}(4)_1$ WZW model.
\end{proof}

Using further results on rank-two sublattices, we can also describe the local maxima of $\Delta_1$ on the Narain moduli space. We first record an elementary lemma, where the determinant of a lattice means the determinant of its Gram matrix.

\begin{lemma}\label{even-rank2}
Let $L\subset\mathbb R^4$ be a full-rank even integral lattice. Then every rank-two sublattice $N\subset L$ satisfies $\det N\ge3$, with equality if and only if $N$ is isometric to $A_2$. Consequently, if $L$ contains a sublattice isometric to $A_2$ and $d=\det L$, then the minimal squared covolume of a rank-two sublattice of the covolume-one lattice $d^{-1/8}L$ is $3/\sqrt d$.
\end{lemma}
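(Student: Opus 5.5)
Since $L$ is even integral, any rank-two sublattice $N\subset L$ with basis $u,v$ has an integral Gram matrix
\[
G=\begin{pmatrix} a & b\\ b & c\end{pmatrix},\qquad a=u\cdot u,\; c=v\cdot v\in 2\mathbb Z_{>0},\; b=u\cdot v\in\mathbb Z .
\]
This is a positive definite even binary form. The plan is to reduce $G$ by Lagrange--Gauss reduction and then read off the bound.

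\textbf{Step 1 (reduction).} Apply Lagrange--Gauss reduction to assume $|2b|\le a\le c$; this changes the basis but not $\det N=ac-b^2$. Then
\[
\det N\ge ac-\frac{a^2}{4}\ge \frac34 a^2\ge 3,
\]
because $a\ge 2$. If $\det N=3$, every inequality is an equality: $a=c=2$ and $|b|=1$. Hence $G$ is the Cartan matrix of $A_2$ up to the sign of $b$, and $N\cong A_2$. Conversely $\det A_2=3$. As an alternative to reduction, enumerate: if $a=c=2$ then $\det N=4-b^2$ with $|b|\le1$, giving $3$ or $4$; if $\max(a,c)\ge4$, reduction gives $\det N\ge 8-1=7$ or similar.

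\textbf{Step 2 (rescaling).} For the consequence, put $L'=d^{-1/8}L$. Its Gram determinant is $d^{-1}\det L=1$, so $L'$ has covolume one. Scaling by $d^{-1/8}$ multiplies lengths by $d^{-1/8}$. The squared covolume of a rank-two sublattice equals its Gram determinant, which is homogeneous of degree $4$ in lengths, so it gets multiplied by $d^{-4/8}=d^{-1/2}$. Rank-two sublattices of $L'$ correspond bijectively to those of $L$. By Step 1 together with the existence of an $A_2$ sublattice, the minimum over $L$ of $\det N$ is exactly $3$. The minimal squared covolume in $L'$ is therefore $3/\sqrt d$.

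Both steps are routine. The only point needing care is the equality case: reduction must be justified so that the reduced form still has $a\ge2$ even and $b$ integral. This holds because the reduced basis is still a basis of $N\subset L$.
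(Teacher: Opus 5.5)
Your proposal is correct and follows essentially the same route as the paper. Both take a reduced basis with $2|b|\le a\le c$, derive $\det N=ac-b^2\ge 3a^2/4\ge 3$ with equality forcing $a=c=2$ and $|b|=1$ (the $A_2$ Gram matrix), and then note that rescaling by $d^{-1/8}$ multiplies every rank-two Gram determinant by $d^{-1/2}$.
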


\begin{proof}
Choose a reduced basis $(u,v)$ of $N$, so that $\|u\|^2\le\|v\|^2$ and $2|u\cdot v|\le\|u\|^2$~\cite{MR1957723}. Put $a=\|u\|^2$, $b=u\cdot v$, $c=\|v\|^2$. Since $L$ is even integral, $a,c\in2\mathbb Z_{>0}$ and $b\in\mathbb Z$, hence $\det N=ac-b^2\ge3a^2/4\ge3$. Equality forces $a=c=2$ and $|b|=1$, the Gram matrix of $A_2$. Finally, rescaling by $d^{-1/8}$ makes the covolume one and multiplies the determinant of every rank-two sublattice by $d^{-1/2}$.
\end{proof}

For a lattice $L\subset\mathbb R^n$, consider the minimum of $\det N/(\det L)^{2/n}$ over rank-two sublattices $N\subset L$; the planes spanned by sublattices attaining it are the \emph{minimal planes} of $L$. Following~\cite{MR1407461}, $L$ is \emph{$2$-extreme} if this quantity has a local maximum at $L$, \emph{$2$-perfect} if the orthogonal projections onto its minimal planes span the space of symmetric endomorphisms of $\mathbb R^n$, and \emph{$2$-eutactic} if the identity is a positive linear combination of these projections.

\begin{theorem}\label{c3-local}
For $c=3$, $\Delta_1$ attains a local maximum at $M\Lambda_0$ precisely when $g\mathbb Z^4$ is $2$-extreme. Moreover, $A_4$ rescaled to covolume one is $2$-extreme and gives a local maximum with $\Delta_1=3/(2\sqrt5)<3/4$. In particular, in contrast with the case $c=2$, not every local maximum of $\Delta_1$ is global.
\end{theorem}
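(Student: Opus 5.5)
By Corollary~\ref{c3-formula}, $2\Delta_1(M\Lambda_0)$ equals the minimal squared covolume of a rank-two sublattice of $L=g\mathbb Z^4$. Since $\det L=1$, this is exactly the invariant $\det N/(\det L)^{1/2}$ minimized over rank-two sublattices $N\subset L$. The plan is to transfer local maxima between the Narain moduli space and the space of unimodular lattices in $\mathbb R^4$, and then handle $A_4$ separately.

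\textbf{Step 1: transfer of local maxima.} By Lemma~\ref{iso-c3}, $\rho:\SL_4(\mathbb R)\to SO^+(3,3;\mathbb R)$ is a surjective local diffeomorphism (a double cover). Hence a neighbourhood of $M$ corresponds to a neighbourhood of $g$, and $\Delta_1(\rho(g')\Lambda_0)$ is the invariant above evaluated at $g'\mathbb Z^4$. So $\Delta_1$ has a local maximum at $M\Lambda_0$ if and only if $g'\mapsto \min_N \det(g'N)$ has a local maximum at $g$ on $\SL_4(\mathbb R)$.

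Because this invariant is scale-invariant, local maxima over $\SL_4(\mathbb R)$ coincide with local maxima over $\GL_4(\mathbb R)$ of $\min_N\det N/(\det L)^{1/2}$. This is exactly the definition of $2$-extreme. Scaling is a transverse direction in which the invariant is constant, so this equivalence should be checked, but it is routine. This proves the first assertion.

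\textbf{Step 2: $A_4$ is $2$-extreme.} I would invoke the result of~\cite{MR1407461}, the analogue of Voronoi's theorem for Rankin's invariant: a lattice that is $2$-perfect and $2$-eutactic is $2$-extreme.

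\emph{Locating the minimal planes.} $A_4$ is even with $\det A_4=5$ and contains $A_2$. By Lemma~\ref{even-rank2}, the minimal rank-two covolume after rescaling is $3/\sqrt5$, so $\Delta_1=3/(2\sqrt5)$, which is less than $3/4$ since $\sqrt5>2$. Also by Lemma~\ref{even-rank2}, the minimal planes are exactly the $A_2$-planes. These are spanned by pairs of roots with inner product $\pm1$. In the model $A_4\subset\{x\in\mathbb Z^5:\sum x_i=0\}$ they are the planes of roots supported on a $3$-element subset $\{i,j,k\}\subset\{1,\dots,5\}$, giving $10$ planes.

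\emph{$2$-eutaxy.} The projections $P_T$ onto these planes form a single orbit under $\mathrm{Sym}_5\subset\Aut(A_4)$. Hence $\sum_T P_T$ commutes with the irreducible action of $\mathrm{Sym}_5$ on $\mathbb R^4$. By Schur's lemma, the sum is a positive multiple of the identity, which gives eutaxy with equal coefficients.

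\emph{$2$-perfection.} We need the $10$ projections to span the $10$-dimensional space $\Sym^2(\mathbb R^4)$, i.e.\ to be linearly independent. I would verify this by writing $P_T$ for the plane $\{x:\ x_l=0 \text{ for } l\notin T,\ \sum x=0\}$ as an explicit $5\times5$ matrix restricted to the sum-zero hyperplane. Its entries are $\tfrac23$ on the diagonal for $i\in T$ and $-\tfrac13$ off the diagonal for $i\ne j\in T$, with zeros elsewhere. The off-diagonal pattern indexed by pairs $\{i,j\}$ already gives the incidence matrix of pairs versus triples in $\{1,\dots,5\}$. I would check that the $10\times10$ matrix obtained by pairing against a basis of $\Sym^2$ is nonsingular, for instance via the $\mathrm{Sym}_5$-decomposition $\Sym^2(\mathbb R^4)=\mathbf 1\oplus\mathbf 4\oplus\mathbf 5$ and checking that each isotypic component is hit.

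\textbf{Main obstacle.} I expect the $2$-perfection check to be the hardest step. If it fails or is awkward, I would fall back on citing~\cite{MR1407461}, where $A_4$ is listed as $2$-perfect and $2$-eutactic (indeed $2$-extreme). With $A_4$ established as $2$-extreme, Step~1 shows it gives a local maximum with $\Delta_1=3/(2\sqrt5)<3/4$, which is not global.
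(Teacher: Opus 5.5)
Your proposal is correct. Step~1 is the same transfer argument the paper uses; your remark about passing between $\SL_4(\mathbb R)$ and $\GL_4(\mathbb R)$ via scale invariance makes explicit a point the paper leaves implicit.

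The real difference is how you show that $A_4$ is $2$-extreme. The paper simply cites Coulangeon's Theorem~5.1.1, that every irreducible root lattice of rank at least $4$ is $2$-extreme. You instead verify $2$-eutaxy and $2$-perfection by hand and then use only the direction ``$2$-perfect and $2$-eutactic $\Rightarrow$ $2$-extreme'' of the Voronoi-type criterion, Coulangeon's Theorem~3.2.3. The paper's route is shorter and handles all root lattices at once. Yours is self-contained for $A_4$ and gives more information:
\begin{itemize}
\item the minimal planes are exactly the ten $A_2$-planes;
\item the eutaxy coefficients can be taken equal, since $\sum_T P_T=5I$;
\item $A_4$ is $2$-perfect with exactly $\dim\Sym^2(\mathbb R^4)=10$ minimal planes, the fewest possible.
\end{itemize}

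The perfection check you left open does go through, so the fallback citation is unnecessary. Each $P_T$ is a symmetric $5\times5$ matrix with zero row sums, so it is determined by its off-diagonal entries. Hence linear independence of the $P_T$ is equivalent to nonsingularity of the $10\times10$ inclusion matrix $W$ of triples versus pairs in $\{1,\dots,5\}$. One has $W^TW=3I+A$, where $A$ is the adjacency matrix of the graph on pairs in which two pairs are adjacent when they share exactly one element (the complement of the Petersen graph). The eigenvalues of $A$ are $6,1,-2$, so $W^TW$ has eigenvalues $9,4,1$ and $W$ is invertible.

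Equivalently, write $w_T$ for the rows of $W$, $S=\sum_T w_T$, and $p^c$ for the triple complementary to a pair $p$. Then $e_p=\tfrac12\bigl(\sum_{T\supset p}w_T-\tfrac13 S+w_{p^c}\bigr)$ for each pair $p$, which exhibits every basis vector in the span of the rows.
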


\begin{proof}
By Corollary~\ref{c3-formula}, $2\Delta_1$ is the minimal squared covolume of a rank-two sublattice of $g\mathbb Z^4$. Since $\rho$ is a surjective local diffeomorphism (Lemma~\ref{iso-c3}), local maxima of $\Delta_1$ in $M$ correspond to local maxima in $g\in\SL_4(\mathbb R)$, so $g\mathbb Z^4$ varies over all unimodular lattices in $\mathbb R^4$. The local maxima of the minimal squared covolume of a rank-two sublattice are the $2$-extreme lattices, which are exactly the $2$-perfect and $2$-eutactic lattices~\cite[Theorem~3.2.3]{MR1407461}. Therefore $\Delta_1$ attains a local maximum precisely when $g\mathbb Z^4$ is $2$-extreme.

Coulangeon also showed that every irreducible root lattice of rank at least $4$ is $2$-extreme~\cite[Theorem~5.1.1]{MR1407461}; in particular, $A_4$ is $2$-extreme. Since $A_4$ is even of determinant $5$ and contains $A_2$, Lemma~\ref{even-rank2} shows that the minimal squared covolume of a rank-two sublattice of $A_4$ rescaled to covolume one is $3/\sqrt5$, so $\Delta_1=3/(2\sqrt5)$.
\end{proof}

Theorem~\ref{c3-local} is not a classification: to our knowledge, the $2$-extreme lattices in dimension four have not been classified (see~\cite[Section~5.2]{MR1407461} and~\cite[Appendix, \S A6, Table~1, p.~80]{watanabe-fundamental} for the $2$-perfect lattices currently known in dimension four; not all of them are $2$-eutactic, e.g., $W_4$ and $W_4^*$), so further local maxima may exist. The dual lattice $A_4^*$, however, gives nothing new. The Hodge star $\ast$ on $\bigwedge^2\mathbb R^4$ satisfies $\ast f_i=f_{i+3}$ and $\ast f_{i+3}=f_i$, so $\Phi_3^{-1}\circ\ast\circ\Phi_3=J_3\in O(3)\times O(3)$. Moreover, one checks that $\ast(g\cdot\omega)=g^{-T}\cdot(\ast\omega)$ for $g\in\SL_4(\mathbb R)$, so $\ast$ maps $\bigwedge^2(g\mathbb Z^4)$ onto $\bigwedge^2\bigl((g\mathbb Z^4)^*\bigr)$. Hence the Narain lattices associated with a covolume-one lattice and with its dual differ by $J_3$, and $A_4^*$ gives the same Narain CFT as $A_4$.

\section{Comparison with dense Euclidean lattices}\label{sec:dense}

A Narain lattice of central charge $c$, when equipped with the Euclidean inner product, is a determinant-one Euclidean lattice of dimension $2c$. Therefore,
\[
    2\Delta_1(\Lambda)
    =
    \min_{x\in\Lambda\setminus\{0\}}\lVert x\rVert^2
    \leq
    \gamma_{2c},
\]
where $\gamma_{2c}$ is the Hermite constant in dimension $2c$.

For $4\leq c\leq12$, let $L_{2c}$ denote the lattice listed in the corresponding row of Table~\ref{tab:dense-lattice-obstruction}, the densest lattice known in dimension $2c$~\cite[Table~1.2]{MR1662447}, normalized to have determinant one, and let
\[
    \Gamma_{2c}
    =
    \min_{x\in L_{2c}\setminus\{0\}}\lVert x\rVert^2
\]
denote its squared minimum. If $L_{2c}$ admits a Narain structure, then the corresponding Narain lattice has spectral gap $\Delta_1=\Gamma_{2c}/2$. We now determine which of these dense Euclidean candidates admit such a structure.

The following criterion is an integral form of~\cite[Lemma~B.1]{MR4257711}, which shows that symmetric matrices $G$ and $H$, with $G$ positive definite and $H$ of signature $(c,c)$, are the Gram matrices $B^TB$ and $B^TJ_cB$ of a common basis $B$ if and only if $(GH^{-1})^2=I$. We include a short proof for completeness.

\begin{theorem}\label{narain-characterization}
    Let $E\in\operatorname{Sym}_{2c}^{+}(\mathbb R)$ with $\det E=1$. Then $E$ is the Euclidean Gram matrix of a Narain lattice if and only if there exists a symmetric integral matrix $K$ with even diagonal entries such that
    \[
        KE^{-1}K=E, \qquad \tr(E^{-1}K)=0.
    \]
\end{theorem}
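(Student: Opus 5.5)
Write $E=B^TB$ for a basis $B$ of a lattice $\Lambda\subset\R^{2c}$, and set $K=B^TJ_cB$. The idea is that $K$ is the Lorentzian Gram matrix of the same basis, and that evenness, self-duality and signature of $\Lambda$ translate into the three conditions on $K$.

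For the forward direction, let $\Lambda=B\Z^{2c}$ be a Narain lattice. Then $K$ is integral with even diagonal because $\Lambda$ is even with respect to $J_c$. Using $J_c^2=I$ and $B^{-1}B^{-T}=E^{-1}$, we compute
\[
KE^{-1}K=B^TJ_cBB^{-1}B^{-T}B^TJ_cB=B^TJ_c^2B=E .
\]
For the trace, cyclicity gives
\[
\tr(E^{-1}K)=\tr(B^{-1}B^{-T}B^TJ_cB)=\tr(J_c)=0 .
\]
We do not need self-duality separately here, but it is consistent with the conditions: $\det K=\det E\cdot(-1)^c$, so $|\det K|=1$.

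For the converse, suppose $K$ is given. Pick any $B_0$ with $B_0^TB_0=E$, for instance $B_0=E^{1/2}$, and put
\[
H=B_0^{-T}KB_0^{-1},
\]
which is a real symmetric matrix. The relation $KE^{-1}K=E$ becomes $H^2=I$. Indeed,
\[
H^2=B_0^{-T}KB_0^{-1}B_0^{-T}KB_0^{-1}=B_0^{-T}(KE^{-1}K)B_0^{-1}=I .
\]
Hence $H$ is a symmetric involution, so its eigenvalues are $\pm1$. Moreover $\tr H=\tr(E^{-1}K)=0$, so there are exactly $c$ eigenvalues of each sign.

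By the spectral theorem there is $Q\in O(2c)$ with $Q^THQ=J_c$ after ordering the eigenvalues. Set $B=B_0^{-T}$-adjusted basis
\[
B=QB_0 ,
\]
taking $Q$ in the form that satisfies $QHQ^T=J_c$. Then
\[
B^TB=B_0^TB_0=E,\qquad B^TJ_cB=B_0^TQ^TJ_cQB_0=B_0^THB_0=K .
\]
So $\Lambda=B\Z^{2c}$ has Euclidean Gram matrix $E$ and Lorentzian Gram matrix $K$.

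It remains to check that $\Lambda$ is a Narain lattice. It is integral and even with respect to $J_c$ because $K$ is integral with even diagonal. It is self-dual because $\det K=\pm\det E=\pm1$, and an integral lattice with unimodular Gram matrix is self-dual. Its signature is $(c,c)$ by construction, since the ambient form is $J_c$.

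The main point requiring care is the orientation of conjugations (whether $Q$ or $Q^T$ diagonalizes $H$) and confirming that the trace condition pins down the signature. Both are routine linear algebra.
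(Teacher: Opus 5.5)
Your proof is correct and follows the paper's argument essentially verbatim. The paper takes $B_0=E^{1/2}$, writes $E^{-1/2}KE^{-1/2}=Q^TJ_cQ$, sets $B=QE^{1/2}$, and gets self-duality from $(\det K)^2=(\det E)^2=1$. Do clean up the garbled sentence about $Q$: what you need, and what you actually use, is $H=Q^TJ_cQ$ (equivalently $QHQ^T=J_c$), so drop the stray ``$Q^THQ=J_c$'' and the phrase ``$B=B_0^{-T}$-adjusted basis''.
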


\begin{proof}
    Suppose first that $E$ is the Euclidean Gram matrix of a Narain lattice $\Lambda$, and let $B\in GL_{2c}(\mathbb R)$ be a basis matrix for $\Lambda$, so that $E=B^TB$. Set $K=B^TJ_cB$. Then $K$ is an even symmetric integral matrix since $\Lambda$ is a Narain lattice. Moreover,
    \[
        KE^{-1}K
        =
        B^TJ_cB(B^TB)^{-1}B^TJ_cB
        =
        B^TJ_c^2B
        =
        B^TB
        =
        E.
    \]
    We also have $E^{-1}K=B^{-1}J_cB$, and hence $\tr(E^{-1}K)=\tr J_c=0$.

    Conversely, suppose that such a matrix $K$ exists, and let $A=E^{-1/2}KE^{-1/2}$. Then $A$ is symmetric and $A^2=I$. Hence all eigenvalues of $A$ are equal to $\pm1$. Furthermore, $\tr A=\tr(E^{-1}K)=0$, so each eigenvalue occurs with multiplicity $c$. By the spectral theorem, there exists $Q\in\Orth(2c)$ such that $A=Q^TJ_cQ$.

    Let $B=QE^{1/2}$ and $\Lambda=B\mathbb Z^{2c}$. Then $B^TB=E$ and
    \[
        B^TJ_cB
        =
        E^{1/2}Q^TJ_cQE^{1/2}
        =
        E^{1/2}AE^{1/2}
        =
        K.
    \]
    Thus $\Lambda$ is even and integral. Finally, taking determinants in $KE^{-1}K=E$ gives $(\det K)^2=(\det E)^2=1$. Therefore $K$ is unimodular, so $\Lambda$ is self-dual. Hence $\Lambda$ is a Narain lattice with Euclidean Gram matrix $E$.
\end{proof}

\subsection{An arithmetic obstruction}

For $4\leq c\leq12$, choose an integral Gram matrix $\widetilde E_c$ of a lattice similar to $L_{2c}$ that is primitive, i.e., whose entries have greatest common divisor $1$, and let $E_c$ be its determinant-one normalization. Let $d_c=\det\widetilde E_c$, and let $q_c$ be the smallest positive integer such that
\[
    G_c=q_c\widetilde E_c^{-1}
\]
is integral. Then
\[
    E_c=d_c^{-1/(2c)}\widetilde E_c,
    \qquad
    E_c^{-1}=\frac{d_c^{1/(2c)}}{q_c}G_c.
\]

By Theorem~\ref{narain-characterization}, if $E_c$ is the Euclidean Gram matrix of a Narain lattice, then there exists an even symmetric integral matrix $K$ satisfying
\[
    KG_cK=q_cd_c^{-1/c}\widetilde E_c,
    \qquad
    \tr(G_cK)=0.
\]
Since the left-hand side of the first equation is integral and $\widetilde E_c$ is primitive, it follows that $q_cd_c^{-1/c}$ must be an integer.

For the standard dense lattice candidates, the following table lists these quantities, computed from the integral Gram matrices in the Nebe--Sloane catalogue of lattices~\cite{NebeSloane} (see also~\cite{MR1662447}).

\begin{table}[ht]
    \centering
    \begin{tabular}{c|c|c|c|c}
        $c$ & $L_{2c}$ & $d_c$ & $q_c$ & $q_cd_c^{-1/c}$ \\ \hline
        $4$  & $E_8$          & $1$   & $1$  & $1$ \\
        $5$  & $\Lambda_{10}$ & $768$ & $12$ & $\displaystyle \frac{12}{\sqrt[5]{768}}$ \\
        $6$  & $K_{12}$       & $3^6$ & $3$  & $1$ \\
        $7$  & $\Lambda_{14}$ & $768$ & $12$ & $\displaystyle \frac{12}{\sqrt[7]{768}}$ \\
        $8$  & $BW_{16}$      & $2^8$ & $2$  & $1$ \\
        $9$  & $\Lambda_{18}$ & $192$ & $6$  & $\displaystyle \frac{6}{\sqrt[9]{192}}$ \\
        $10$ & $\Lambda_{20}$ & $2^6$ & $4$  & $2^{7/5}$ \\
        $11$ & $\Lambda_{22}$ & $12$  & $6$  & $\displaystyle \frac{6}{\sqrt[11]{12}}$ \\
        $12$ & $\Lambda_{24}$ & $1$   & $1$  & $1$
    \end{tabular}
    \caption{The arithmetic obstruction for the standard dense Euclidean lattice candidates. Here $K_{12}$ is the Coxeter--Todd lattice, $BW_{16}$ the Barnes--Wall lattice, and $\Lambda_n$ the laminated lattices.}
    \label{tab:dense-lattice-obstruction}
\end{table}

For $c=5,7,9,10,11$, the quantity $q_cd_c^{-1/c}$ is not an integer (Table~\ref{tab:dense-lattice-obstruction}). Therefore the corresponding putatively densest Euclidean lattices cannot occur as the Euclidean lattices underlying Narain lattices.

The arithmetic obstruction leaves the cases $c=4,6,8,12$. The lattices $E_8$, $K_{12}$, and $BW_{16}$ are known to admit Narain structures, and we computationally show that each admits a unique such structure up to automorphisms of the underlying Euclidean lattice. The remaining case $c=12$, corresponding to the Leech lattice, is ruled out separately.

\subsection{The cases \texorpdfstring{$c=4,6,8$}{c=4, 6, 8}}

For a fixed determinant-one Euclidean Gram matrix $E$, define
\[
    \mathcal K(E)=\bigl\{K\in\Sym_{2c}(\mathbb Z):K\text{ is even},\ KE^{-1}K=E,\ \tr(E^{-1}K)=0\bigr\}.
\]
Fix a basis matrix $B\in GL_{2c}(\mathbb R)$ with $E=B^TB$, and let $\Lambda=B\mathbb Z^{2c}$ be the Euclidean lattice it spans. Define
\[
    \Aut(\Lambda)=\{U\in O(2c):U\Lambda=\Lambda\}
\]
and
\[
\Orth(E,\mathbb Z)=\{P\in GL_{2c}(\mathbb Z):P^TEP=E\}.
\]
The latter is the coordinate realization of $\Aut(\Lambda)$ in the basis $B$, as the map $P\mapsto BPB^{-1}$ gives an isomorphism $\Orth(E,\mathbb Z)\cong\Aut(\Lambda)$. Some authors call either of these groups the automorphism group of the lattice; we reserve $\Aut(\Lambda)$ for the geometric group and use $\Orth(E,\mathbb Z)$ for its integral realization, which is more convenient for the enumeration below.

The group $\Orth(E,\mathbb Z)$ acts on the right on $\mathcal K(E)$ by $K\cdot P=P^TKP$. Indeed, if $K\in\mathcal K(E)$ and $P\in\mathrm O(E,\mathbb Z)$, then $P^TKP$ is symmetric and integral, and it is even since $x^TP^TKPx=(Px)^TK(Px)\in2\mathbb Z$ for every $x\in\mathbb Z^{2c}$. Moreover, $P^TEP=E$ implies $PE^{-1}P^T=E^{-1}$, and hence
\[
    (P^TKP)E^{-1}(P^TKP)=P^TKE^{-1}KP=P^TEP=E.
\]
The trace condition is also preserved, since
\[
    \tr(E^{-1}P^TKP)=\tr(PE^{-1}P^TK)=\tr(E^{-1}K)=0.
\]
Thus $P^TKP\in\mathcal K(E)$. Finally, $(K\cdot P)\cdot Q=Q^TP^TKPQ=K\cdot(PQ)$, so this is a right action.

This action corresponds to conjugation of the associated involutions. If $K'=K\cdot P=P^TKP$, then $x^TK'y=(Px)^TK(Py)$ for all $x,y\in\mathbb Z^{2c}$. Since $P$ represents an isometry of the Euclidean lattice $\Lambda$, the Narain structure defined by $K'$ is obtained from that defined by $K$ by applying a symmetry of $\Lambda$. We therefore regard two elements of $\mathcal K(E)$ in the same $\Orth(E,\mathbb Z)$-orbit as defining equivalent Narain structures on $\Lambda$. This agrees with the $O(c)\times O(c)$ equivalence of Narain lattices: if $B_i^TB_i=E$ and $B_i^TJ_cB_i=K_i$ for $i=1,2$ with $K_2=P^TK_1P$, then $Q=B_1PB_2^{-1}$ satisfies $Q^TQ=I$ and $Q^TJ_cQ=J_c$, so $Q\in O(c)\times O(c)$ and $QB_2\mathbb Z^{2c}=B_1\mathbb Z^{2c}$; conversely, such a $Q$ yields such a $P=B_1^{-1}QB_2$.

\begin{theorem}\label{thm:unique-structures}
    Each of the Euclidean lattices $E_8$, $K_{12}$, and $BW_{16}$ admits, up to automorphisms of the underlying Euclidean lattice, a unique Narain structure.
\end{theorem}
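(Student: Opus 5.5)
By Theorem~\ref{narain-characterization} and the orbit discussion above, the statement is equivalent to showing that for each $E\in\{E_8,K_{12},BW_{16}\}$, normalized to determinant one, the set $\mathcal K(E)$ is nonempty and forms a single $\Orth(E,\mathbb Z)$-orbit. I would translate the problem into involutions. For $K\in\mathcal K(E)$ put $T=E^{-1}K$. Then $T^2=E^{-1}KE^{-1}K=I$ and $T^TET=KE^{-1}K=E$. Moreover $T$ is integral, since $E^{-1}=(d^{1/(2c)}/q)G$ and Table~\ref{tab:dense-lattice-obstruction} gives $q_cd_c^{-1/c}=1$, which makes $q$ and $d^{1/(2c)}$ compatible. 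Hence $T\in\Orth(E,\mathbb Z)$ is an involutive lattice automorphism with $\tr T=0$, i.e.\ with $\pm1$-eigenspaces of dimension $c$ each. Conversely, $K=ET$ recovers $K$, and the action $K\mapsto P^TKP$ becomes conjugation $T\mapsto P^{-1}TP$. So I must classify conjugacy classes of trace-zero involutions $T\in\Aut(\Lambda)$ for which $ET$ is even.

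Since $ET$ is automatically symmetric and integral (for the rescaled integral Gram matrix, after checking the normalization), the only extra condition is evenness. In terms of the rescaled integral lattice $\widetilde\Lambda$ with $\widetilde E$, the condition reads
\[
  (x\cdot Tx)/\sqrt{d^{1/c}}\in 2\mathbb Z \quad\text{for all } x\in\widetilde\Lambda .
\]
I would handle each lattice in turn.

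For $E_8$ the rescaling factor is $1$. Trace-zero involutions in $W(E_8)$ fall into a few classes, determined by the fixed sublattice, for example $D_4\oplus D_4$ type splittings. The condition $x\cdot Tx\in 2\mathbb Z$ is equivalent to $x\cdot(x+Tx)/2$-type parity conditions on the fixed and anti-fixed lattices. I expect exactly one class to survive: the one with $\Lambda^{T}\cong\Lambda^{-T}\cong D_4$ glued so that $\Lambda\cong\{(x,y)\in D_4^*\oplus D_4^*: x-y\in D_4\}$, which is the $\mathrm{Spin}(8)_1$ Narain lattice.

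For $K_{12}$ and $BW_{16}$ the automorphism groups are large (of order roughly $10^{8}$ and $10^{10}$), so I would compute. First, obtain $\Orth(E,\mathbb Z)$ with standard lattice-automorphism software (Plesken--Souvignier). Next, enumerate its conjugacy classes of involutions, for instance with \textsf{GAP} or \textsf{Magma} via a permutation representation on minimal vectors. Then filter the classes by $\tr T=0$ and test evenness of $\widetilde ET/\sqrt{d^{1/c}}$ on a basis. The test requires only that the diagonal entries be even and the off-diagonal entries be integral.

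Existence should come out of the enumeration, but I would cross-check it against known constructions. For $K_{12}$ one can use a construction over the Eisenstein integers; for $BW_{16}$, a construction from a self-dual code, or the involution given by multiplication by a suitable unit in the $\mathbb Z[\sqrt{2}]$- or quaternionic structure.

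The main obstacle is making the conjugacy-class enumeration reliable for $BW_{16}$, since the group has order $89181388800$. I would try a permutation representation on the $4320$ minimal vectors, which is small enough for standard conjugacy-class algorithms. I would then certify the result independently. One check is that the orbit of the surviving class has size $|\Orth(E,\mathbb Z)|/|C(T)|$. The other is a direct search: enumerate solutions $K$ of $KE^{-1}K=E$ by choosing the rows $Te_i$ among lattice vectors of norm $\|e_i\|^2$, since $T$ is an isometry, and backtracking, then verify that the number of solutions matches the orbit size.
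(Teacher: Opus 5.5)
\textbf{There is a genuine gap: the reduction to lattice automorphisms fails for $K_{12}$ and $BW_{16}$.} Your claim that $T=E^{-1}K$ is integral ``because $q_cd_c^{-1/c}=1$'' is false there. That identity says $q_c=(d_c^{1/(2c)})^2$. Hence $E_c^{-1}=\frac{d_c^{1/(2c)}}{q_c}G_c=G_c/\sqrt{q_c}$, and $T=G_cK/\sqrt{q_c}$, with $q_c=3$ and $q_c=2$ respectively. Since $G_cK$ is integral, $T$ could be integral only if $T=0$, which contradicts $T^2=I$.

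Geometrically, $T=B^{-1}J_cB$ represents $J_c$, and $J_c$-self-duality of $\Lambda$ says exactly that $J_c\Lambda=\Lambda^*$, the Euclidean dual. So $J_c$ is an automorphism of $\Lambda$ only when $\Lambda$ is unimodular. That is why the involution argument works for $E_8$, and in Theorem~\ref{thm:leech}, where the paper explicitly uses integrality of $E^{-1}$. The determinant-one normalizations of the $3$-modular $K_{12}$ and the $2$-modular $BW_{16}$ satisfy $\Lambda\cong\Lambda^*$ but $\Lambda\neq\Lambda^*$, so no Narain involution lies in $\Orth(E,\Z)$ at all.

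Your pipeline would therefore return nothing. For an involution $T\in\Orth(\widetilde E,\Z)$, the matrix $\widetilde ET$ is symmetric and integral, so your test $x^T\widetilde ETx/\sqrt{q_c}\in2\Z$ forces $x^T\widetilde ETx=0$ for all $x$. That gives $\widetilde ET=0$, again contradicting $T^2=I$. You would ``prove'' that these lattices carry no Narain structure, contrary to the known constructions. Your cross-check has the same defect: $BTe_i=J_cb_i$ lies in $\Lambda^*$, not in $\Lambda$, so searching for $Te_i$ among lattice vectors of norm $\norm{e_i}^2$ searches the wrong set.

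There are two ways to repair this. The first keeps your idea but changes the ambient group. Replace $\Orth(E,\Z)$ by the index-two group $\{U\in\Orth(2c):U\Lambda\in\{\Lambda,\Lambda^*\}\}$. Then classify the trace-zero involutions $U$ in the non-identity coset, i.e.\ isometries $\Lambda\to\Lambda^*$, with $v\cdot Uv\in2\Z$ for all $v\in\Lambda$. The classification must be up to conjugation by $\Aut(\Lambda)$ only, which matches $K\mapsto P^TKP$.

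The second is the paper's route, which avoids involutions entirely and enumerates the integral matrices $K$ directly. Each column $k_j$ is an integral vector with $k_j^TG_ck_j=(\widetilde E_c)_{jj}$ and even $j$-th entry; these are the dual-basis coordinates of $J_cb_j\in\Lambda^*$. Columns are fixed one at a time subject to the Gram, symmetry and parity constraints, with pruning on $\tr(G_cK)=0$. The resulting finite solution sets ($3150$, $756$, $17280$) are then shown to be single orbits under $\Orth(\widetilde E_c,\Z)$.

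For $E_8$ your plan is sound, but ``I expect exactly one class to survive'' still needs an argument. For example, use the model $E_8=\tfrac1{\sqrt2}\{x\in\Z^8:x\bmod 2\in H_8\}$. The sign change $\varepsilon_S$ on a $4$-set $S$ satisfies $v\cdot\varepsilon_Sv=\norm{v}^2-2\norm{v_S}^2$, which is even for all $v$ if and only if $S\in H_8^\perp=H_8$. Combine this with the fact that $W(E_8)$ has exactly two classes of trace-zero involutions, represented by $\varepsilon_S$ with $S$ a codeword or not, whose fixed lattices are $D_4$ and $4A_1$.
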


\begin{proof}
    The existence of Narain structures on $E_8$, $K_{12}$, and $BW_{16}$ was established in~\cite[Table~1 and Appendix~B]{MR4257711}. It remains to determine the number of equivalence classes.

    For each lattice, choose matrices $\widetilde E_c$ and $G_c$ as above. We enumerate all even symmetric integral solutions $K$ to
    \[
        KG_cK=q_cd_c^{-1/c}\widetilde E_c,
        \qquad
        \tr(G_cK)=0.
    \]
    In each of the cases $c=4,6,8$, we have $q_cd_c^{-1/c}=1$, so these equations reduce to
    \[
        KG_cK=\widetilde E_c,
        \qquad
        \tr(G_cK)=0.
    \]

    We start from the Cartan matrix of $E_8$ and the Gram matrices of $K_{12}$ and $BW_{16}$ in~\cite{NebeSloane}. The short vectors are enumerated with PARI/GP~\cite{PARI} (\texttt{qfminim}), and this list is checked against an independent exact enumeration based on a rational $LDL^T$ decomposition with integer coordinate bounds; all subsequent steps use exact integer arithmetic. After enumerating all solutions, we quotient them by the congruence action $K\mapsto P^TKP$ of $\Orth(\widetilde E_c,\mathbb Z)$, whose generators are computed with \texttt{qfauto} and verified to satisfy $P^T\widetilde E_cP=\widetilde E_c$. The enumeration itself proceeds column by column. Writing $k_j$ for the $j$-th column of $K$, the equation $KG_cK=\widetilde E_c$ reads entrywise
\[
    k_i^{T}G_c\,k_j=(\widetilde E_c)_{ij}.
\]
In particular, taking $i=j$ shows that each column $k_j$ is an integral vector of $G_c$-norm
\[
    k_j^{T}G_c\,k_j=(\widetilde E_c)_{jj},
\]
whose $j$-th coordinate is even. Enumerating once, with the Fincke--Pohst algorithm (\texttt{qfminim}), the finite ball
\[
    \bigl\{\,v\in\mathbb Z^{2c}\setminus\{0\}
        : v^{T}G_c v\le\max_j(\widetilde E_c)_{jj}\,\bigr\}
\]
therefore supplies, for each $j$, the finite list
\[
    C_j=\bigl\{\,v\in\mathbb Z^{2c}: v^{T}G_c v=(\widetilde E_c)_{jj},\ v_j\in2\mathbb Z\,\bigr\}
\]
of candidate columns.

We then fix the columns one at a time. Having chosen $k_{\mathrm{col}}$, we retain a vector $u$ as a candidate for column $j$ only when
\[
    u^{T}G_c\,k_{\mathrm{col}}=(\widetilde E_c)_{j,\mathrm{col}},
    \qquad
    u_{\mathrm{col}}=(k_{\mathrm{col}})_j,
    \qquad
    u_j\equiv 0 \ (\mathrm{mod}\ 2),
\]
the three conditions enforcing the Gram equation, the symmetry of $K$, and the even diagonal, respectively. Since
\[
    \tr(G_cK)=\sum_j (G_c k_j)_j,
\]
a partial assignment is abandoned as soon as the minimal and maximal remaining contributions show that $\tr(G_cK)=0$ can no longer be met, and a completed $K$ is kept precisely when $\tr(G_cK)=0$.

This produces the complete set of even symmetric integral solutions. We recheck every solution against the defining identities in exact arithmetic, and then perform the congruence quotient described above.

    Finally, for one solution $K_0$ we compute its orbit under $\Orth(\widetilde E_c,\mathbb Z)$ by applying the generators until the orbit closes. The results are listed in Table~\ref{tab:enumeration}.

\begin{table}[ht]
    \centering
    \begin{tabular}{c|c|c|c|c}
         & $|\Orth(\widetilde E_c,\mathbb Z)|$ & $|C_j|$ & $\#\{K\}$ & $\#$orbits \\ \hline
        $E_8$    & $696\,729\,600$        & $128$     & $3\,150$  & $1$ \\
        $K_{12}$ & $78\,382\,080$         & $436$     & $756$     & $1$ \\
        $BW_{16}$ & $89\,181\,388\,800$   & $2\,400$  & $17\,280$ & $1$
    \end{tabular}
    \caption{Results of the enumeration in the proof of Theorem~\ref{thm:unique-structures}. The order of $\Orth(\widetilde E_c,\mathbb Z)$ is as reported by \texttt{qfauto}, and the number $|C_j|$ of candidate columns is the same for all $j$ in the chosen bases.}
    \label{tab:enumeration}
\end{table}

In each case the orbit of $K_0$ under the group generated by these verified automorphisms is already the full set of solutions, so $\Orth(\widetilde E_c,\mathbb Z)$ acts transitively on it. Hence each lattice admits exactly one Narain structure up to automorphisms of the underlying Euclidean lattice. The program, the input Gram matrices, and the solutions are provided as ancillary files with the arXiv version of this paper.
\end{proof}

Consequently, the standard dense Euclidean candidates attain the corresponding Euclidean packing values within the Narain moduli space for $c=4,6,8$; the resulting Narain lattices have $\Delta_1=1$, $2/\sqrt3$, and $\sqrt2$, respectively. In these cases,
\[
    \sup_{\Lambda\subset\R^{c,c}\ \mathrm{Narain}}\Delta_1(\Lambda)\ge \frac{\Gamma_{2c}}{2}.
\]

\subsection{The Leech lattice}

Afkhami-Jeddi et al.~\cite[Appendix~B]{MR4257711} showed that the Leech lattice cannot underlie a Narain CFT, as it is excluded by their numerical spinning-bootstrap bounds. We give a conceptual proof of the same result using the automorphism group of the Leech lattice and the evenness of the Narain form. The arithmetic obstruction above does not suffice, since $q_{12}d_{12}^{-1/12}=1$; the additional obstruction comes from the parity condition on $K$.

\begin{theorem}\label{thm:leech}
    The Leech lattice cannot occur as the Euclidean lattice underlying a Narain lattice.
\end{theorem}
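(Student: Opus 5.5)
The plan is to turn a Narain structure on the Leech lattice into a trace-zero involution in $\mathrm{Co}_0=\Aut(\Lambda_{24})$, and then show that every such involution violates the parity condition. Suppose $\Lambda=B\mathbb Z^{24}$ is a Narain lattice whose Euclidean lattice is the Leech lattice. Put $E=B^TB$ and $K=B^TJ_{12}B$, as in the proof of Theorem~\ref{narain-characterization}.

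\textbf{Step 1: $J_{12}$ is an automorphism of $\Lambda$.} Consider the map $x\mapsto J_{12}x$ on $\R^{24}$. It is Euclidean-orthogonal, and its trace is $\tr J_{12}=0$. For $x,y\in\Lambda$ we have $(J_{12}x)\cdot y=x^TJ_{12}y$, which is the Lorentzian pairing and hence an integer. So $J_{12}x$ lies in the Euclidean dual of $\Lambda$. The Leech lattice is Euclidean unimodular, so this dual is $\Lambda$ itself. Therefore $\varepsilon:=J_{12}|_\Lambda$ belongs to $\Aut(\Lambda)\cong\mathrm{Co}_0$ and is an involution of trace $0$.

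Evenness of the Narain form says exactly that
\[
x\cdot\varepsilon x\in2\mathbb Z\qquad\text{for all }x\in\Lambda.
\]
This condition is invariant under conjugation: if $\varepsilon'=U\varepsilon U^{-1}$ with $U\in\mathrm{Co}_0$, then $x\cdot\varepsilon' x=(U^{-1}x)\cdot\varepsilon(U^{-1}x)$. So it suffices to exhibit one trace-zero involution and one vector $x$ with $x\cdot\varepsilon x$ odd, provided that involution represents every trace-zero involution up to conjugacy.

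\textbf{Step 2: classify the trace-zero involutions.} I will invoke the classification of involutions in $\mathrm{Co}_0$, read off from the character table of the $24$-dimensional representation (ATLAS). The involution classes are $\pm1$, with traces $\pm24$, and classes with traces $\pm8$ and $0$. There is a single class of trace $0$.

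A representative of that class is Conway's $\varepsilon_C$, which changes the signs of the coordinates indexed by a dodecad $C$ of the Golay code. Its trace is $12-12=0$, and it lies in $\mathrm{Co}_0$ because sign changes on Golay codewords preserve the standard coordinate description of $\Lambda_{24}$ (scaled by $1/\sqrt8$). This step is where I expect the main obstacle. One has to justify that the trace-zero class is unique, either from the character table or from the fact that the fixed lattice of such an involution is determined, and that $\varepsilon_C$ represents it. I would cite Conway's analysis of involutions in $\mathrm{Co}_0$ rather than reprove it.

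\textbf{Step 3: exhibit an odd value.} Take
\[
x=\tfrac1{\sqrt8}(-3,1^{23})\in\Lambda_{24},
\]
with the coordinate $-3$ placed inside $C$. Then
\[
x\cdot\varepsilon_C x=\tfrac18\Bigl(\textstyle\sum_{i\notin C}x_i^2-\sum_{i\in C}x_i^2\Bigr)=\tfrac18\,(12-9-11)=-1.
\]
This is odd, contradicting evenness. Hence no Narain structure exists.

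For comparison, octad vectors $\tfrac1{\sqrt8}(2^8,0^{16})$ give only the even values $2,0,-2$, according to the intersection size with $C$. This shows the parity obstruction is genuinely detected only by vectors such as the $(-3,1^{23})$ type.
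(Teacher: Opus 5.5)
Your proposal is correct and follows essentially the same route as the paper. The paper's involution $S=BE^{-1}KB^{-1}$ is your $J_{12}$ written in Gram-matrix form. The paper then likewise uses the classification of involutions in $\mathrm{Co}_0$ to reduce to $\varepsilon_D$ for a dodecad $D$, and evaluates on the same minimal vector $\tfrac1{\sqrt8}(-3,1^{23})$ to obtain the odd value $-1$.
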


\begin{proof}
    Choose a basis matrix $B$ of the Leech lattice $\Lambda_{24}=B\mathbb Z^{24}$, let $E=B^TB$, and suppose for contradiction that $E$ is the Euclidean Gram matrix of a Narain lattice. By Theorem~\ref{narain-characterization}, there exists an even symmetric integral matrix $K$ such that $KE^{-1}K=E$ and $\tr(E^{-1}K)=0$.

    Let $T=E^{-1}K$. Since $E$ and $K$ are symmetric, we have
    \[
        T^TET=KE^{-1}K=E.
    \]
    Since the Leech lattice is unimodular, $E^{-1}$ is integral. Hence $T$ is integral, and the identity $T^TET=E$ shows that $T\in\Orth(E, \mathbb Z)$. Therefore, if we define $S = BTB^{-1}$, then $S \in \Aut(\Lambda_{24})$. For every $v=Bx \in \Lambda_{24}$, we have
    \[
        v^T S v = (Bx)^T S (Bx)=x^T(B^T B)Tx=x^TKx\in2\mathbb Z
    \]
    since $K$ is even. Furthermore,
    \[
        S^2 = BT^2B^{-1} = BE^{-1}KE^{-1}KB^{-1} = BE^{-1}EB^{-1} = I
    \] and $\tr (S) = \tr(T) = 0$.

    We use the standard coordinates for $\Lambda_{24}$~\cite[Chapter~4, \S11]{MR1662447}: a vector $\frac1{\sqrt8}x$ with $x\in\mathbb Z^{24}$ lies in $\Lambda_{24}$ if and only if, for some $m\in\{0,1\}$, all $x_i\equiv m\pmod2$, $\sum_ix_i\equiv4m\pmod8$, and for each $k$ the set $\{i:x_i\equiv k\pmod4\}$ is a codeword of the extended binary Golay code. In these coordinates, for each codeword $C$, the sign change $\varepsilon_C$, which negates the coordinates indexed by $C$ and fixes the remaining ones, is an automorphism of $\Lambda_{24}$. By the classification of involutions in $\mathrm{Co}_0=\Aut(\Lambda_{24})$~\cite[Theorem~10.15]{MR1707296}, every involution in $\Aut(\Lambda_{24})$ is conjugate to $\varepsilon_C$ for a codeword $C$ that is an octad, the complement of an octad, a dodecad, or the whole coordinate set; octads and dodecads are the codewords of weight $8$ and $12$. Since $\tr\varepsilon_C=24-2|C|$, the trace-zero involution $S$ is conjugate to $\varepsilon_D$ for some dodecad $D$. Choose $U\in\Aut(\Lambda_{24})$ with $U S U^{-1}=\varepsilon_D$. For any $v\in\Lambda_{24}$ we have $U^{-1}v\in\Lambda_{24}$, and $U^T=U^{-1}$, so
    \[
        v^T\varepsilon_Dv=v^TUSU^{-1}v=\bigl(U^{-1}v\bigr)^TS\bigl(U^{-1}v\bigr)
        \in2\mathbb Z.
    \]

    Now fix $j\in D$ and let
    \[
        v=\tfrac{1}{\sqrt8}\,(1,\dots,1,-3,1,\dots,1)\in\Lambda_{24},
    \]
    where the entry $-3$ occurs in the $j$-th coordinate.\footnote{This is a minimal vector of $\Lambda_{24}$: all coordinates are odd and congruent to $1$ modulo $4$, so the sets $\{i:x_i\equiv k\pmod 4\}$ are the whole coordinate set and the empty set, both Golay codewords; the coordinate sum is $20\equiv4\pmod 8$; and the squared norm is $(23+9)/8=4$.} Among the twelve coordinates negated by $\varepsilon_D$ one is the $j$-th, carrying the entry $-3$, while the twelve fixed coordinates all carry the entry $1$. Hence
    \[
        v^T\varepsilon_Dv=\frac{12-(11+9)}{8}=-1,
    \]
    which is odd, giving a contradiction. Therefore no such $K$ exists, and the Leech lattice cannot occur as the Euclidean lattice underlying a Narain lattice.
\end{proof}

We have therefore determined which of the standard dense Euclidean lattice candidates in dimensions $8$ through $24$ admit Narain structures. The candidates in dimensions $10$, $14$, $18$, $20$, and $22$ are excluded by the arithmetic obstruction. By the computer-assisted Theorem~\ref{thm:unique-structures}, the lattices $E_8$, $K_{12}$, and $BW_{16}$ each admit a unique Narain structure up to automorphisms of the underlying Euclidean lattice, while the Leech lattice is excluded by the classification of its trace-zero involutions.

\section*{Acknowledgments}
We are grateful to our advisor, Henry Cohn, for suggesting this problem, for his guidance and many helpful discussions, for advice on the computations and the literature, and for comments on earlier drafts of this paper.

During the preparation of this paper, the authors used Claude Opus~5.5 (Anthropic) as an assistant for editing the exposition and the \LaTeX{} source, for suggestions on the organization of the paper, and for checking the rigor of the proofs and proposing arguments for some auxiliary steps. All mathematical content, computations, and references were independently verified by the authors, who take full responsibility for the contents of this paper.

\bibliographystyle{plain}
\bibliography{references}

\begin{thebibliography}{10}

\bibitem{MR4257711}
Nima Afkhami-Jeddi, Henry Cohn, Thomas Hartman, and Amirhossein Tajdini.
\newblock Free partition functions and an averaged holographic duality.
\newblock {\em J. High Energy Phys.}, 2021(1):Paper No. 130, 42, 2021.

\bibitem{MR4916198}
Nikolaos Angelinos.
\newblock Code construction and ensemble holography of simply-laced {WZW}
  models at level 1.
\newblock {\em J. High Energy Phys.}, 2025(6):Paper No. 2, 30, 2025.

\bibitem{MR4514857}
Nikolaos Angelinos, Debarghya Chakraborty, and Anatoly Dymarsky.
\newblock Optimal {N}arain {CFT}s from codes.
\newblock {\em J. High Energy Phys.}, 2022(11):Paper No. 118, 21, 2022.

\bibitem{MR4327142}
Nathan Benjamin, Scott Collier, A.~Liam Fitzpatrick, Alexander Maloney, and
  Eric Perlmutter.
\newblock Harmonic analysis of 2d {CFT} partition functions.
\newblock {\em J. High Energy Phys.}, 2021(9):Paper No. 174, 50, 2021.

\bibitem{MR1973059}
Henry Cohn and Noam Elkies.
\newblock New upper bounds on sphere packings. {I}.
\newblock {\em Ann. of Math. (2)}, 157(2):689--714, 2003.

\bibitem{MR2600869}
Henry Cohn and Abhinav Kumar.
\newblock Optimality and uniqueness of the {L}eech lattice among lattices.
\newblock {\em Ann. of Math. (2)}, 170(3):1003--1050, 2009.

\bibitem{MR1662447}
J.~H. Conway and N.~J.~A. Sloane.
\newblock {\em Sphere packings, lattices and groups}, volume 290 of {\em
  Grundlehren der mathematischen Wissenschaften [Fundamental Principles of
  Mathematical Sciences]}.
\newblock Springer-Verlag, New York, third edition, 1999.
\newblock With additional contributions by E. Bannai, R. E. Borcherds, J.
  Leech, S. P. Norton, A. M. Odlyzko, R. A. Parker, L. Queen and B. B. Venkov.

\bibitem{MR1407461}
Renaud Coulangeon.
\newblock R\'eseaux {$k$}-extr\^emes.
\newblock {\em Proc. London Math. Soc. (3)}, 73(3):555--574, 1996.

\bibitem{MR1467163}
Renaud Coulangeon.
\newblock Minimal vectors in the second exterior power of a lattice.
\newblock {\em J. Algebra}, 194(2):467--476, 1997.

\bibitem{MR4771416}
A.~Liam Fitzpatrick and Wei Li.
\newblock Improving modular bootstrap bounds with integrality.
\newblock {\em J. High Energy Phys.}, 2024(7):Paper No. 58, 16, 2024.

\bibitem{MR1288936}
A.~Giveon, M.~Porrati, and E.~Rabinovici.
\newblock Target space duality in string theory.
\newblock {\em Phys. Rep.}, 244(2-3):77--202, 1994.

\bibitem{MR1707296}
Robert~L. Griess, Jr.
\newblock {\em Twelve sporadic groups}.
\newblock Springer Monographs in Mathematics. Springer-Verlag, Berlin, 1998.

\bibitem{MR4075697}
Thomas Hartman, Dalimil Maz\'a\v{c}, and Leonardo Rastelli.
\newblock Sphere packing and quantum gravity.
\newblock {\em J. High Energy Phys.}, 2019(12):Paper No. 048, 66, 2019.

\bibitem{MR1957723}
Jacques Martinet.
\newblock {\em Perfect lattices in {E}uclidean spaces}, volume 327 of {\em
  Grundlehren der mathematischen Wissenschaften [Fundamental Principles of
  Mathematical Sciences]}.
\newblock Springer-Verlag, Berlin, 2003.

\bibitem{MR506372}
John Milnor and Dale Husemoller.
\newblock {\em Symmetric bilinear forms}, volume~73 of {\em Ergebnisse der
  Mathematik und ihrer Grenzgebiete [Results in Mathematics and Related
  Areas]}.
\newblock Springer-Verlag, New York-Heidelberg, 1973.

\bibitem{MR834338}
K.~S. Narain.
\newblock New heterotic string theories in uncompactified dimensions {$<10$}.
\newblock {\em Phys. Lett. B}, 169(1):41--46, 1986.

\bibitem{NebeSloane}
Gabriele Nebe and N.~J.~A. Sloane.
\newblock A catalogue of lattices.
\newblock \url{https://www.math.rwth-aachen.de/~Gabriele.Nebe/LATTICES/}.

\bibitem{PARI}
The PARI~Group, Univ. Bordeaux.
\newblock {\em {PARI/GP} version \texttt{2.17.2}}, 2025.
\newblock Available from \url{https://pari.math.u-bordeaux.fr/}.

\bibitem{MR55380}
R.~A. Rankin.
\newblock On positive definite quadratic forms.
\newblock {\em J. London Math. Soc.}, 28:309--314, 1953.

\bibitem{MR3664816}
Maryna~S. Viazovska.
\newblock The sphere packing problem in dimension 8.
\newblock {\em Ann. of Math. (2)}, 185(3):991--1015, 2017.

\bibitem{watanabe-fundamental}
Takao Watanabe.
\newblock {\em Fundamental Domains for Arithmetic Quotients of Reductive
  Algebraic Groups}, volume~10 of {\em Lecture Note Series in Mathematics}.
\newblock Osaka Sugaku Kankokai, Osaka, 2025.
\newblock \url{https://doi.org/10.18910/102699}.

\end{thebibliography}

\end{document}